\newcommand{\todoall}[1]{\todo[color=gray,size={\fontsize{8pt}{9pt}\selectfont}]{#1}}
\newcommand{\todoalexander}[1]{\todo[color=green,size={\fontsize{8pt}{9pt}\selectfont}]{\textbf{AK}: #1}}
\renewcommand{\todo}[2][]{}  %uncomment for removing todos
\title{An Institution for Simple UML State Machines}
\author{Alexander Knapp\inst{1} 
   \and Till Mossakowski\inst{2} 
   \and Markus Roggenbach\inst{3}
   \and Martin Glauer\inst{2}} 
\institute{Universit\"at Augsburg, Germany 
      \and Otto-von-Guericke Universit\"at Magdeburg, Germany
      \and Swansea University, UK}
\begin{document}

\maketitle
\begin{abstract}
We present an institution for UML state machines without hierarchical
states. The interaction with UML class diagrams is handled via
institutions for guards and actions, which provide dynamic components
of states (such as valuations of attributes) but abstract away from
details of class diagrams. We also study a notion of interleaving
product, which captures the interaction of several state machines.
The interleaving product construction is the basis for a semantics of
composite structure diagrams, which can be used to specify the
interaction of state machines.  This work is part of a larger effort
to build a framework for formal software development with UML, based
on a heterogeneous approach using institutions.

\medskip\noindent
\emph{Keywords}: UML, state machines, interleaving product, institutions
\end{abstract}

\section{Introduction}

The ``Unified Modeling Language''
(UML~\cite{uml-2.4.1-superstructure}) is a heterogeneous language: UML
comprises a language family of 14 types of diagrams of structural and
behavioural nature.  These sub-languages are linked through a common
meta-model, i.e., through abstract syntax; their semantics, however,
is informally described mainly in isolation.
In~\cite{knapp-mossakowski-roggenbach:corr:2014}, we have outlined our
research programme of ``institutionalising UML''. Our objective is to
give, based on the theory of institutions~\cite{GoguenBurstall92},
formal, heterogeneous semantics to UML, that --- besides providing
formal semantics for the individual sub-languages --- ultimately
allows to ask questions concerning the consistency between different
diagram types and concerning refinement and implementation in a system
development.  In this paper, we propose a new institution for UML
state machines.

\emph{Behavioural} UML state machines specify the behaviour of model
elements, like components, whereas \emph{protocol} UML state machines
express usage protocols, like the message exchange over a connector
between components.  Both variants describe dynamical system behaviour
in terms of action effects and messages, where conditions are used to
choose between different possibilities of the behaviour. We tackle the
well-known resulting problem of integrating specifications of data
(i.e., action effects and messages), logic (i.e., conditions), and
processes (i.e., state machines)
\cite{grosse-rhode:2004,roggenbach06,MossakowskiRoggenbach07,oreilly12b}
by a two-step semantics: In the first step, we define
\emph{institutions of guards and actions} that capture which guards,
actions, and messages can be used in order to define a state
machine. In general, other UML diagrams like class diagrams or OCL
constraints specify these items, i.e., define a suitable
environment. In a second step, we then define institutions for
behavioural and protocol state machines relative to given institutions
of guards and actions.  However, currently both of our institutions
are restricted to ``flat'', non-hierarchical state machines; in fact,
most of the hierarchical features can be reduced to this
format~\cite{schattkowsky-mueller:vlhcc:2005,fecher-schoenborn:fmics:2007}.
A previous UML state machine institution by D.~Calegari and
N.~Szasz~\cite{calegari-szasz:isse:2011} encoded all these features on
a single (signature) level thus reducing integration flexibility
considerably; furthermore, it only comprised behavioural state
machines and captured each state machine in isolation. By contrast, we
study interacting state machines and the refinement of state machines.
 
Our institution of behavioural state machines has the peculiarity of
being a ``programming language-like'' institution, in the sense that
each sentence essentially has one model, its canonical model.  By
contrast, our institution of protocol state machines is a ``loose
semantics'' institution where generally a sentence has many models.
For system development, we introduce an interleaving product of
several state machines in our institution, which allows us to consider
refinement for checking the correct implementation of protocols and
which ideally could be integrated into the current efforts for
providing precise semantics for UML composite
structures~\cite{pscs-beta-1}.  Furthermore, we consider the
determinism of state machines to foster code
generation~\cite{derezinska-szczykulski:cst:2012}.

The remainder of this paper is structured as follows: In
Sect.~\ref{sec:hiuml} we provide some background on our goal of
heterogeneous institution-based UML semantics and introduce a small
example illustrating behavioural and protocol UML state machines.  In
Sect.~\ref{sec:UML-SM} we define institutions for these variants of
state machines. We study a notion of determinism for state machines,
their interleaving, and their refinement based on the institutions in
Sect.~\ref{sec:refinement}.  Finally, in Sect.~\ref{sec:conclusions}
we conclude with an outlook to future work.

\section{Heterogeneous Institution-based UML Semantics}
\label{sec:hiuml}

The work in this paper is part of a larger
effort~\cite{knapp-mossakowski-roggenbach:corr:2014} of giving an
institution-based heterogeneous semantics to several UML diagrams as
shown in Fig.~\ref{fig:languages}.
\begin{figure}[!Hb]
\centering
\begin{tikzpicture}[transform shape,scale=.86,auto,every loop/.style={-latex'}]
\tikzstyle{ingredient}=[shape=rectangle,draw,align=center,inner sep=0cm,outer sep=0cm,text width=3.2cm,minimum height=.8cm,minimum width=2.8cm,font={\sffamily\fontsize{8pt}{8pt}\selectfont}]
\tikzstyle{shorter}=[align=center,text width=2.4cm,minimum width=2.4cm,font={\sffamily\fontsize{8pt}{8pt}\selectfont}]
\tikzstyle{annotation}=[font={\sffamily\bfseries\fontsize{8pt}{8pt}\selectfont}]
\tikzstyle{consistency}=[draw,latex'-latex',color=black]
\tikzstyle{translation}=[draw,-latex',color=black,line width=2pt]
\draw (0, 0) node[ingredient,shorter] (mscs) {Interactions};
\draw (.4, 0) node[ingredient,draw=none] (propertiesV) {};
\draw (.4, 0) node[ingredient,draw=none] (mscsV) {};
\draw (0, -1) node[ingredient,shorter] (psms) {Protocol State\\ Machines};
\draw (.4, -1) node[ingredient,draw=none] (psmsV) {};
\draw (0, -2) node[ingredient,shorter,draw=none] (ocl) {Object Constraint\\ Language (OCL)};
\draw (.4, -2) node[ingredient,draw=none] (oclV) {};
\path[draw] (ocl.north west) |-
            ($ (ocl.south east) + (.8, 0) $) -|
            ($ (mscs.north east) + (.8, 0) $) -|
            ($ (ocl.north east) + (.21, 0) $) --
            cycle;
\draw (4.4, 0) node[ingredient] (sms) {State Machines};
\draw (4.4, 0) node[ingredient,draw=none] (typesV) {};
\draw (4.4, -1) node[ingredient] (sd) {Component Diagram};
\draw (4.4, -2) node[ingredient] (cd) {Class Diagram};
\draw (8.4, 0) node[ingredient,dashed] (smsi) {State Machine\\ Instances};
\draw (8.4, -1) node[ingredient] (csd) {Composite Structure\\ Diagram};
\draw (8.4, 0) node[ingredient,draw=none] (instancesV) {};
\draw (8.4, -2) node[ingredient] (od) {Object Diagram};
\draw (4.4, -1) node[shape=rectangle,draw,minimum height=3cm,minimum width=11.4cm] (uml) {};
\draw (uml.north east) node[annotation,anchor=north west] {Modelling in UML};
\draw (.4, -3.2) node[ingredient] (acsl) {ACSL};
\draw (6.4, -3.2) node[ingredient,minimum width=7.2cm] (c) {C};
\draw ($ (c.north east) + (0.1, 0.0) $) node[annotation,anchor=north west] {Implementation};
\draw ($ (propertiesV) + (0, 0.8) $) node[annotation,anchor=base] {Properties};
\draw ($ (typesV) + (0, 0.8) $) node[annotation,anchor=base] {Types};
\draw ($ (instancesV) + (0, 0.8) $) node[annotation,anchor=base] {Instances};
\end{tikzpicture}
\caption{Languages and diagrams to be considered}
\label{fig:languages}
\end{figure}
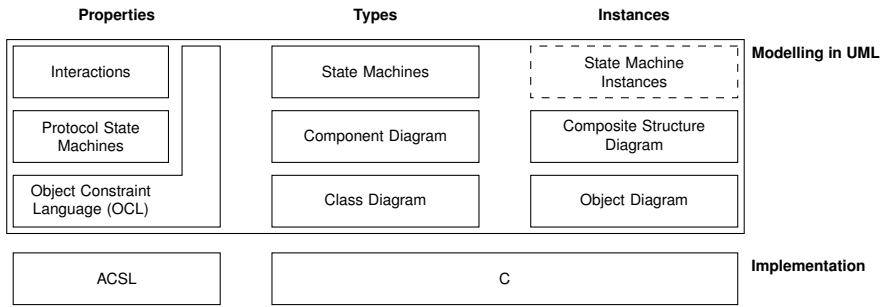
The vision is to provide semantic foundations for model-based
specification and design using a heterogeneous framework based on
Goguen's and Burstall's theory of institutions
\cite{GoguenBurstall92}.  We handle the complexity of giving a
coherent semantics to UML by providing several institutions
formalising different diagrams of UML, and several institution
translations (formalised as so-called institution morphisms and
comorphisms) describing their interaction and information flow. The
central advantage of this approach over previous approaches to formal
semantics for UML (e.g., \cite{lano:uml2:2009}) is that each UML
diagram type can stay ``as-is'', without the need of a coding using
graph grammars (as in \cite {DBLP:conf/uml/EngelsHK03}) or some logic
(as in \cite{lano:uml2:2009}). This also keeps full flexibility in the
choice of verification mechanisms.  The formalisation of UML diagrams
as institutions has the additional benefit that a notion of refinement
comes for free, see \cite{caslref,CASL-refinement-journal}.
Furthermore, the framework is flexible enough to support various
development paradigms as well as different resolutions of UML's
semantic variation points.  This is the crucial advantage of the
proposed approach to the semantics of UML, compared to existing
approaches in the literature which map UML to a specific global
semantic domain in a fixed way.

% Many UML diagrams already have been formalised as institutions
% (see~\cite{knapp-mossakowski-roggenbach:corr:2014} for an overview). The existing
% institution for behavioural UML state machines~\cite{calegari-szasz:isse:2011}, however,
% is limited to a single state machine, and protocol UML state machines have not been institutionalised
% before.  Below, we sketch institutions for these,
% which are very similar. Only their sentences differ in that protocol
% state machines have a post condition instead of an action. Post
% conditions can also speak about messages being sent (using OCL). %
% \todoall{Note that state machines are loose as well, namely in the
%   effect of their actions, which need to be specified within a class
%   diagram or dynamic logic.}%

\subsection{Institutions}

Institutions are an abstract formalisation of the notion of logical
system.  Informally, institutions provide four different logical
notions: signatures, sentences, models and satisfaction.  Signatures
provide the vocabulary that may appear in sentences and that is
interpreted in models.  The satisfaction relation determines whether a
given sentence is satisfied in a given model. The exact nature of
signatures, sentences and models is left unspecified, which leads to a
great flexibility. This is crucial for the possibility to model UML
diagrams (which in the first place are not ``logics'') as
institutions.

%% An important feature of institutions is the presence of
%% signature morphisms, which can be seen as mappings between
%% signatures. Sentences can be translated along signature
%% morphisms, and models reduced \emph{against} signature
%% morphisms. The satisfaction condition states that
%% satisfaction is invariant under change of notation
%% (along a signature morphism).

More formally~\cite{GoguenBurstall92}, an institution $\institution{I}
= (\instSig{I}, \instSen{I}, \instMod{I}, {\instmodels{I}})$ consists
of (i)~a category of \emph{signatures} $\instSig{I}$; (ii)~a
\emph{sentence functor} $\instSen{I} : \instSig{I} \to
\category{Set}$, where $\category{Set}$ is the category of sets;
(iii)~a contra-variant \emph{model functor} $\instMod{I} :
(\instSig{I})\op \to \category{Class}$, where $\category{Class}$ is
the category of classes; and (iv)~a family of \emph{satisfaction
  relations} ${\instmodels[\Sigma]{I}} \subseteq |\instMod{I}(\Sigma)|
\times \instSen{I}(\Sigma)$ indexed over $\Sigma \in |\instSig{I}|$,
such that the following \emph{satisfaction condition} holds for every
signature morphism $\sigma : \Sigma \to \Sigma'$ in $\instSig{I}$,
every sentence $\varphi \in \instSen{I}(\Sigma)$ and for every
$\Sigma'$-model $M' \in |\instMod{I}(\Sigma')|$:
\begin{equation*}
  \instMod{I}(\sigma)(M') \instmodels[\Sigma]{I} \varphi
\ \iff\ 
  M' \instmodels[\Sigma']{I} \instSen{I}(\sigma)(\varphi)
\ \text{.}
\end{equation*}
$\instMod{I}(\sigma)$ is called the \emph{reduct} functor (also
written ${-}\reductop\sigma$), $\instSen{I}(\sigma)$ the
\emph{translation} function (also written $\sigma({-})$).

A \emph{theory} $T$ in an institution consists of a signature
$\Sigma$, written $\mathit{sig}(T)$, and a set of $\Sigma$-sentences;
its model class is the set of all $\Sigma$-models satisfying the
sentences.

An institution $\institution{I}$ has the
\emph{weak amalgamation property} for a pushout
\begin{equation*}
\begin{tikzpicture}[inner sep=0pt, outer sep=2pt]
  \matrix (m) [matrix of math nodes, ampersand replacement=\&, row sep=2.5em, column sep=2.5em, text height=1.7ex, text depth=0.25ex]{
    \Sigma   \& \Sigma_1\\
    \Sigma_2 \& \Sigma_R\\
  };
  \path[->,font=\scriptsize]
  (m-1-1) edge (m-2-1)
  (m-1-1) edge (m-1-2)
  (m-1-2) edge (m-2-2)
  (m-2-1) edge (m-2-2)
  ;
\end{tikzpicture}
\end{equation*}
if any pair $(M_1, M_2) \in \instMod{I}(\Sigma_1) \times
\instMod{I}(\Sigma_2)$ that is \emph{compatible} in the sense that
$M_1$ and $M_2$ reduce to the same $\Sigma$-model can be
\emph{amalgamated} to a $\Sigma_R$-model $M_R$ (i.e., there exists a
$M_R \in \instMod{I}(\Sigma_R)$ that reduces to $M_1$ and $M_2$,
respectively).

Weak amalgamation allows the computation of normal forms for
specifications \cite{Borzyszkowski:2002:LSS:639756.639759}, and
implies good behaviour w.r.t.  conservative extensions, as well as
soudness of proof systems for structured specifications
\cite{MossakowskiEtAl06}.

\subsection{ATM Example}

In order to illustrate our approach to a heterogeneous
institutions-based UML semantics in general and the institutions for
UML state machines in particular, we use as a small example the design
of a traditional automatic teller machine (ATM) connected to a
bank. For simplicity, we only describe the handling of entering a card
and a PIN with the ATM. After entering the card, one has three trials
for entering the correct PIN (which is checked by the bank). After
three unsuccessful trials the card is kept.

\begin{figure}[!Ht]
\centering
\subfigure[Interaction\label{fig:interaction}]{%
  \includegraphics[trim=6 6 6 6,clip,scale=0.65]{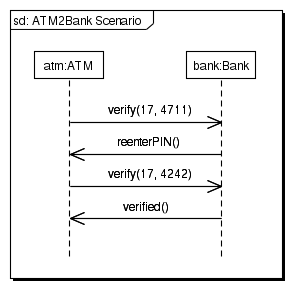}%\\[-1.5ex]
}
\hspace*{0.5cm}
\subfigure[Composite structure\label{fig:system}]{%
  \includegraphics[trim=6 6 6 6,clip,scale=0.65]{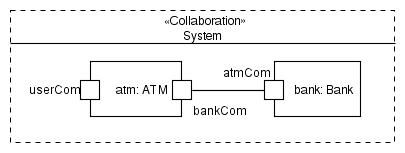}%\\[-1.5ex]
}
\\
\subfigure[Protocol state machine\label{fig:psm}]{%
  \includegraphics[trim=6 6 6 6,clip,scale=.65]{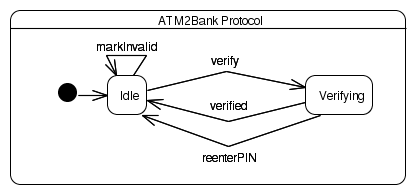}%\\[-1.5ex]
}
\hspace*{0.4cm}
\subfigure[Interface\label{fig:class}]{%
  \includegraphics[trim=6 6 6 6,clip,scale=0.65]{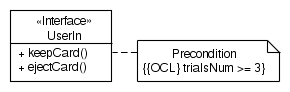}%\\[-1.5ex]
}
\\
\subfigure[State machine\label{fig:state-machine}]{%
  \includegraphics[trim=6 6 6 6,clip,scale=0.65]{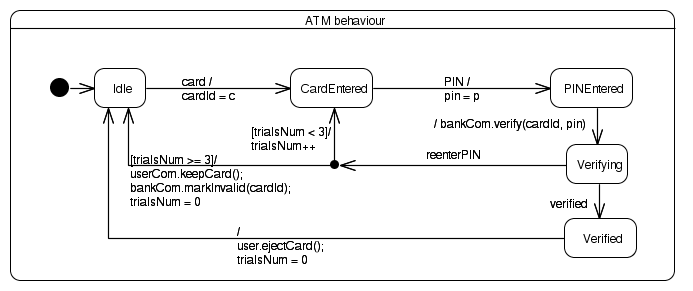}%\\[-1.5ex]
}
\vspace*{-1.5ex}
\caption{ATM example}\label{fig:atm-example}
\end{figure}

Figure~\ref{fig:interaction} shows a possible \emph{interaction}
between an \uml{atm} and a \uml{bank} object, which consists out of
four messages: the \uml{atm} requests the \uml{bank} to \uml{verify}
if a card and PIN number combination is valid, in the first case the
\uml{bank} requests to reenter the PIN, in the second case the
verification is successful.  This interaction presumes that the system
has an \uml{atm} and a \uml{bank} as objects. This can, e.g., be
ensured by a \emph{composite structure diagram}, see
Fig.~\ref{fig:system}, which --- among other things --- specifies the
objects in the initial system state.  Furthermore, it specifies that
the communication between \uml{atm} and \uml{bank} goes through the
two ports \uml{bankCom} and \uml{atmCom} linked by a connector.  The
communication protocol on this connector is captured with a
\emph{protocol state machine}, see Fig.~\ref{fig:psm}.  The protocol
state machine fixes in which order the messages \uml{verify},
\uml{verified}, \uml{reenterPIN}, and \uml{markInvalid} between
\uml{atm} and \uml{bank} may occur.  Figure~\ref{fig:class} provides
structural information in form of an interface specifying what is
provided at the \uml{userCom} port of the \uml{atm} instance. An
interface is a set of operations that other model elements have to
implement. In our case, the interface is described in a \emph{class
  diagram}. Here, the operation \uml{keepCard} is enriched with the
OCL constraint \uml{trialsNum >= 3}, which refines its semantics:
\uml{keepCard} can only be invoked if the OCL constraints holds.

Finally, the dynamic behaviour of the \uml{atm} object is specified by
the \emph{behavioural state machine} shown in
Fig.~\ref{fig:state-machine}. The machine consists of five states
including \uml{Idle}, \uml{CardEntered}, etc.  Beginning in the
initial \uml{Idle} state, the user can \emph{trigger} a state change
by entering the \uml{card}. This has the \emph{effect} that the
parameter \uml{c} from the \uml{card} event is assigned to the
\uml{cardId} in the \uml{atm} object (parameter names are not shown on
triggers). Entering a \uml{PIN} triggers another transition to
\uml{PINEntered}.  Then the ATM requests verification from the bank
using its \uml{bankCom} port.  The transition to \uml{Verifying} uses
a \emph{completion event}: No explicit trigger is declared and the
machine autonomously creates such an event whenever a state is
completed, i.e., all internal activities of the state are finished (in
our example there are no such activities).  If the interaction with
the bank results in \uml{reenterPIN}, and the \emph{guard}
\uml{trialsNum < 3} is true, the user can again enter a \uml{PIN}.

\paragraph{Questions on the model.} 
Given the above diagrams specifying \emph{one} system, the question
arises if they actually ``fit'' together. Especially, one might ask if
the diagrams are consistent, and if the different levels of
abstraction refine each other. In our ATM example we have:

\begin{example}[Consistency]
The interface in Fig.\ \ref{fig:class} requires the operation
\uml{keepCard} only to be invoked when the precondition \uml{trialsNum
  >= 3} holds. This property holds for the state machine in
Fig.~\ref{fig:state-machine} thanks to the guard \uml{trialsNum < 3}.
\end{example}

\begin{example}[Refinement]
As the only trace of the interaction in Fig.\ \ref{fig:interaction}
is a possible run of the state machine in
Fig.\ \ref{fig:state-machine}, the interaction refines to the state
machine.
\end{example}

\begin{example}[Refinement]\label{ex:ref}
 Similarly, we can consider if the protocol state machine in
 Fig.\ \ref{fig:psm} refines to the product of the state machine of
 the \uml{atm}, shown in Fig.\ \ref{fig:state-machine}, and of the
 \uml{bank}; this essentially means to check for a trace inclusion
 w.r.t.\ messages observable on the interfaces, as the protocol state
 machine has no post conditions.
\end{example}

In order to study, e.g., such a refinement between a protocol state
machine and its implementation by state machines, in the following we
develop institutions for state machines including a notion of product.

\section{Institutions for Simple UML State Machines}
\label{sec:UML-SM}

We now detail a possible formalisation of a simplified version of UML
state machines as institutions. In particular, we omit hierarchical
states.  We start with institutions for the \emph{guards} and the
\emph{actions} of a state machine.  These fix the conditions which can
be used in guards of transitions, the actions for the effects of
transitions, and also the messages that can be sent from a state
machine.  The source of this information typically is a class or a
component diagram: The conditions and actions involve the properties
available in the classes or components, the messages are derived from
the available signals and operations.  The sentences of the action
institution form a simple dynamic logic (inspired by OCL) which can
express that if a guard holds as pre-condition, when executing an
action, a certain set of messages has been sent out, and another guard
holds as post-condition.  We then build a family of institutions for
\emph{state machines} over the institutions for guards and actions.  A
state machine adds the events and states that are used.  The events
comprise the signals and operations that can be accepted by the
machine; some of these will, in general, coincide with the messages
from the environment.  Additionally, the machine may react to
completion events, i.e., internal events that are generated when a
state of the machine has been entered and which trigger those
transitions that do not show an explicit event as their trigger in the
diagrammatic representation (we use the states as the names of these
events). %
\todoall{Rolf Hennicker bemerkte, dass die completion events ja gar
  keine seien, solange man nicht richtige Aktivitäten (oder gar
  activity diagrams) in den Zuständen hätten. Später relativierte er
  das aber: es könnte ja Transitionen mit normalem Trigger und
  triggerlose Transitionen geben, für letztere braucht man dann auch
  schon im einfachen Fall completion events. Wenn jedoch beide
  miteinander konkurrieren, ist die Priorität nicht ganz klar. HugoRT
  gibt der triggerlose Transition, also dem completetion event die
  Priorität. Rolf meinte aber, dass man das auch anders machen
  könne.}%
\todoalexander{Ich denke aber, daß obiger Fall, daß aus einem Zustand
  sowohl eine "echte" Transition, also eine mit einem "echten"
  Ereignis, als auch eine Completion transition, also mit Completion
  event, 'rausgehen, tatsächlich auftreten dürfte, und zwar, wenn die
  Completion transition einen Guard hat.  (Das könnte eventuell auch
  beim "Flachklopfen" einer hierarchischen Zustandsmaschine
  automatisch auftreten - muß ich mir aber noch genauer überlegen).
  Laut UML wird das Completion event immer bevorzugt (immer), die
  Completion transition könnte aber durchaus nicht enabled sein, da
  der Guard momentan nicht wahr ist.}%
The initial state as well as the transitions of the machine are
represented as sentences in the institution.\footnote{For simplicity,
final states are left implicit here. For hierarchical states, they
need to be made explicit.} %
\todoall{What about final states? Are they given implicitly as the set
  of states without an outgoing transition? Do they play a role? In a
  sense, they should\ldots Maybe we can briefly discuss runs, both of
  the LTS in the sentence, and of the semantic LTS. AK: Yes, final
  states are currently missing.  For flat state machines they are not
  too important, but they would definitely make a difference for runs
  (getting stuck vs.\ successful termination).}%
In a next step, we combine the family of state machine institutions
parameterised over actions into a single institution.%
\todoall{We could claim that there is a co-morphism from the
  protocol state machine to the environment institution and/or the OCL
  institution --- but this seems to be quite bold.}

%% Protocol state machines differ from behavioural state machines by not
%% mandating a specific behaviour but just monitoring behaviour: They do
%% not show guards and effects, but a pre- and a postcondition for the
%% trigger of a transition.  Moreover, protocol state machines do not
%% just discard an event that currently does not fire a transition; it is
%% an error when such an event occurs.
%% For adapting the state machine institution to protocol state machines
%% we thus just need to change the sentences slightly.

%% There is also a natural product construction on the combined institution
%% that captures communicating state machines from a composite structure
%% diagram.

\subsection{Institution of Guards}

We assume that there is an institution of guards. Typically, guards
are formulas in some language like OCL.  More formally, an
\emph{institution of guards} is an institution where signatures are
sets, and signature morphisms are functions.  (We will call the
elements of these sets variables, but one can think of attributes,
operations and signals being collected here as well.)  Models of a
signature $V$ are valuations $\omega : V \to \Val$ into a fixed set of
values $\Val$\footnote{In UML, variables and values would be typed,
  and variable valuations have to respect the typing. For simplicity,
  we disregard this here. Moreover, for operations, valuations would
  assign values in some function space.}. Model reduct is just
composition, that is, given a signature morphism $v : V \to V'$ and a
model $\omega' : V' \to \Val$, its $v$-reduct is $\omega' \circ
v$. The nature of sentences $G(V)$ and their translation $G(v) : G(V)
\to G(V')$ is left unspecified, as well as the satisfaction relation
--- we only require the satisfaction condition, which amounts to
\begin{equation*}
  \omega' \models G(v)(g)
\quad\text{iff}\quad
  \omega' \circ v \models g
\ \text{.}
\end{equation*}

\begin{example}
Consider the UML component \uml{ATM}.  An guard signature for
\uml{ATM} would contain the variable \uml{trialsNum}, leading to
sentences such as \uml{true}, \uml{trialsNum < $n$}, and
\uml{trialsNum == $n$} for $n \in \NZ$.
\end{example}

\subsection{Institution of Actions}

An object of the category of action \emph{signatures} %
$\instSig{\ENV}$ is a triple of sets
\begin{equation*}
  \Eta = (A_{\Eta}, M_{\Eta},V_{\Eta})
\end{equation*}
of actions, messages and variables; and a morphism $\Eta \to \Eta'$ of
$\instSig{\ENV}$ is a triple of functions $\eta : (\eta_A : A_{\Eta}
\to A_{\Eta'}, \eta_M : M_{\Eta} \to M_{\Eta'}, \eta_V : V_{\Eta} \to
V_{\Eta'})$. %, where $\eta_M$ is injective.
The class of action \emph{structures} $\instMod{\ENV}(\Eta)$ for an
action signature $\Eta$ consists of transition relations
\begin{equation*}
  \Omega \subseteq  |\Omega| \times (A_{\Eta} \times \powerset{M_{\Eta}}) \times |\Omega|
\ \text{,} 
\end{equation*}
where $|\Omega|=(V_\Eta\rightarrow \Val)$ represents the possible
configurations of data states,
% $\omega \in
%\gamma_{\Omega}(g)$ expresses that the configuration $\omega \in (V_\Eta\rightarrow \Val)$
%satisfies guard $g$, 
and
\begin{equation*}
  (\omega,a,\overline{m},\omega')\in\Omega
\quad\text{(also written $\shorttrans{\omega}{a, \overline{m}}{\Omega}{\omega'}$)}
\end{equation*}
expresses that action $a$ leads from state $\omega \in
(V_\Eta\rightarrow \Val)$ to state $\omega' \in
(V_\Eta\rightarrow \Val)$ producing the set of messages
$\overline{m} \subseteq M_{\Eta}$.

The \emph{reduct} $\Omega'\reductop\eta$ of an $\Eta'$-action
structure $\Omega'$ along the morphism $\eta : \Eta \to \Eta'$ is
given by all transitions
\begin{equation*}
  \xtrans{\omega_1\reductop_{\eta_V}}{a,\eta_M^{-1}(\overline{m})}{\Omega'\reductop_\eta}{\omega_2\reductop_{\eta_V}}
\quad\text{for which}\quad
  \xtrans{\omega_1}{\eta_A(a),\overline{m}}{\Omega'}{\omega_2} 
\ \text{.}
\end{equation*}

An action $a$ is called \emph{deterministic} if
$\xtrans{\omega_1}{a,\overline{m}}{\Omega}{\omega_2}$ and
$\xtrans{\omega_1}{a,\overline{m'}}{\Omega}{\omega'_2}$ imply
$\overline{m}=\overline{m'}$ and $\omega_2=\omega'_2$.  An action
relation $\Omega$ is called deterministic if all its actions are
deterministic, that is, it is a partial function of type $|\Omega|
\times A_{\Eta} \rightharpoonup \powerset(M_{\Eta}) \times |\Omega|$.

Note that reducts can introduce non-determinism. Given an action
signature $\Eta$ with $V_\Eta = \{ x, y \}$, suppose that a
deterministic action $a$ leads to a change of state expressed by the
assignment $x:=x+y$.  Now take the reduct to the same signature but
with $V_\Eta=\{x\}$, i.e., the variable $y$ has been removed. Then $a$
performs a non-deterministic assignment $x:=x+y$ where the value for
$y$ is non-deterministically guessed.

The set of action \emph{sentences} $\instSen{\ENV}(\Eta)$ for an
action signature $\Eta$ comprises the expressions
\begin{equation*}
  g_{\mathrm{pre}} \rightarrow [a]\overline{m} \rhd g_{\mathrm{post}}
\end{equation*}
with $g_{\mathrm{pre}},\allowbreak g_{\mathrm{post}} \in G(V_{\Eta})$,
$a \in A_{\Eta}$, and $\overline{m} \subseteq M_{\Eta}$, intuitively
meaning (like an OCL constraint) that if the pre-condition
$g_{\mathrm{pre}}$ currently holds, then, after executing $a$, the
messages $\overline{m}$ are produced and the post-condition
$g_{\mathrm{post}}$ holds.  The \emph{translation}
$\eta(g_{\mathrm{pre}} \rightarrow [a]\overline{m} \rhd
g_{\mathrm{post}})$ of a sentence $g_{\mathrm{pre}} \rightarrow
[a]\overline{m} \rhd g_{\mathrm{post}}$ along the signature morphism
$\eta : \Eta \to \Eta'$ is given by $G(\eta_V)(g_{\mathrm{pre}})
\rightarrow [\eta_A(a)]\eta_M(\overline{m}) \rhd
G(\eta_V)(g_{\mathrm{post}})$.  Finally, the satisfaction relation
$\Omega \instmodels[\Eta]{\ENV} g_{\mathrm{pre}} \rightarrow
[a]\overline{m} \rhd g_{\mathrm{post}}$ holds if, and only if, for all
$\omega \in (V_\Eta\rightarrow \Val)$, if $\omega \models
g_{\mathrm{pre}}$ and $\shorttrans{\omega}{ a, \overline{m}'}{\Omega}{
  \omega'}$, then $\omega' \models g_{\mathrm{post}}$ and $\overline{m}
\subseteq \overline{m}'$.  Then the \emph{satisfaction condition} follows.

%%  for an
%% action signature morphism $\eta : \Eta \to \Eta'$ reads as
%% follows
%% %
%% \begin{gather*}
%%   \Omega'\reductop\eta \instmodels[\Eta]{\ENV} g_{\mathrm{pre}} \rightarrow [a]\overline{m} \rhd g_{\mathrm{post}}
%% \iff{}\\\qquad
%%   (\forall \omega' \in (V_{\Eta'}\rightarrow \Val') \,.\, \omega' \in (\gamma_{\Omega'}\reductop\eta)(g_{\mathrm{pre}})
%% \limp{}\\\qquad\qquad
%%   \forall \omega'' \in (V_{\Eta'}\rightarrow \Val'),\ \overline{m}_{\Eta} \subseteq M_{\Eta} \,.\, (\alpha_{\Omega'}\reductop\eta)(a)(\omega') = (\omega'', \overline{m}_{\Eta})
%% \limp{}\\\qquad\qquad\qquad \omega'' \in (\gamma_{\Omega'}\reductop\eta)(g_{\mathrm{post}}) \land \overline{m} \subseteq \overline{m}_{\Eta})
%% \iff{}\\\qquad
%%   (\forall \omega' \in (V_{\Eta'}\rightarrow \Val') \,.\, \omega' \in \gamma_{\Omega'}(G(\eta_V)(g_{\mathrm{pre}}))
%% \limp{}\\\qquad\qquad
%%   \forall \omega'' \in (V_{\Eta'}\rightarrow \Val'),\ \overline{m}_{\Eta'} \subseteq M_{\Eta'} \,.\, \alpha_{\Omega'}(\eta_A(a))(\omega') = (\omega'', \overline{m}_{\Eta'})
%% \limp{}\\\qquad\qquad\qquad
%%   \omega'' \in \gamma_{\Omega'}(G(\eta_V)(g_{\mathrm{post}})) \land \eta_M(\overline{m}) \subseteq \overline{m}_{\Eta'})
%% \iff{}\\\qquad
%%   \Omega' \instmodels[\Eta']{\ENV} \eta(g_{\mathrm{pre}} \rightarrow [a]\overline{m} \rhd g_{\mathrm{post}})
%% \ \text{.}
%% \end{gather*}

\begin{example}
Consider the UML component \uml{ATM} with its properties \uml{cardId},
\uml{pin}, and \uml{trialsNum}, its ports \uml{userCom} and
\uml{bankCom}, and its outgoing operations \uml{ejectCard()} and
\uml{keepCard()} to \uml{userCom}, and \uml{verify()} and
\uml{markInvalid()} to \uml{bankCom}.  An action signature for
\uml{ATM} is derived by forming actions and messages over
this information, such that it will contain the
actions \uml{user.ejectCard(); trialsNum = 0} and \uml{trialsNum++},
as well as the messages \uml{user.ejectCard()} and
\uml{bank.markInvalid(cardId)}.  Action sentences over such an
action signature could be
\begin{gather*}
  \uml{true} \rightarrow [\uml{user.ejectCard(); trialsNum = 0}] \{ \uml{user.ejectCard()} \} \rhd \uml{trialsNum == 0}
\quad\text{or}\\
  \uml{trialsNum == $n$} \rightarrow [\uml{trialsNum++}]\emptyset \rhd{} \uml{trialsNum == $n$+1}
\ \text{.}
\end{gather*}
\end{example}

\subsection{Behavioural State Machine Institution}

The institution of state machines is now built over the action
institution.  Let $\Eta$ be an action signature and $\Omega$ an
action structure over $\Eta$.  An object of the category of state
machine \emph{signatures} $\instSig{\SM[(\Eta, \Omega)]}$ over $\Eta$
and $\Omega$ is given by a triple
\begin{equation*}
  \Sigma = (E_{\Sigma}, F_{\Sigma}, S_{\Sigma})
\end{equation*}
of (external) events $E_{\Sigma}$, completion events $F_{\Sigma}$, and
states $S_{\Sigma}$ with $E_{\Sigma} \cap F_{\Sigma} = \emptyset$ and
$E_{\Sigma} \cap S_{\Sigma} = \emptyset$; and a morphism $\sigma :
\Sigma \to \Sigma'$ of $\instSig{\SM[(\Eta, \Omega)]}$ is a triple of
injective functions $\sigma = (\sigma_E : E_{\Sigma} \to E_{\Sigma'},
\sigma_F : F_{\Sigma} \to F_{\Sigma'}, \sigma_S : S_{\Sigma} \to
S_{\Sigma'})$, such that $E_\Sigma\cap M_\Eta=E_\Sigma'\cap M_\Eta$ (preservation
of internal messages).  The class of state machine \emph{structures}
$\instMod{\SM[(\Eta, \Omega)]}(E_{\Sigma}, F_{\Sigma}, S_{\Sigma})$
for a state machine signature $(E_{\Sigma}, F_{\Sigma}, S_{\Sigma})$
over $\Eta$ and $\Omega$ consists of the pairs
\begin{equation*}
  \Theta = (I_{\Theta}, \Delta_{\Theta})
\end{equation*}
where $I_{\Theta} \in \powerset (V_\Eta\rightarrow \Val) \times
S_{\Sigma}$ represents the initial configurations, fixing the initial
control state; and $\Delta_{\Theta} \subseteq C \times
\powerset(M_{\Eta}) \times C$ with $C = (V_\Eta\rightarrow \Val)
\times \powerset(E_\Sigma \cup F_{\Sigma}) \times S_{\Sigma}$
represents a transition relation from a configuration, consisting of
an action state, an event pool, and a control state, to a
configuration, emitting a set of messages.  The event pool may contain
both types of events from the signature: external events from signals
and operations, and completion events (which are typically represented
by states).

\begin{example}
Consider the state machine of Fig.~\ref{fig:state-machine} defining
the behaviour of \uml{ATM}.  It works over the action signature
sketched in the previous example, and its signature is
$(E_{\uml{ATM}}, F_{\uml{ATM}}, S_{\uml{ATM}})$ with
\begin{gather*}
  E_{\uml{ATM}} = \{ \uml{card}, \uml{PIN}, \uml{reenterPIN}, \uml{verified} \}
\ \text{,}\\
  F_{\uml{ATM}} = \{ \uml{PINEntered}, \uml{Verified} \}
\ \text{,}\\
  S_{\uml{ATM}} = \{ \uml{Idle}, \uml{CardEntered}, \uml{PINEntered}, \uml{Verifying}, \uml{Verified} \}
\ \text{.}
\end{gather*}
In particular, the completion events consist of those states from
which a completion transition originates.
\end{example}

The \emph{reduct} $\Theta'\reductop\sigma$ of a state machine structure $\Theta'$ along the morphism $\sigma :
\Sigma \to \Sigma'$ is given by the structure
\begin{equation*}
  (\{ (\omega, s) \mid (\omega, \sigma_S(s)) \in I' \}, \Delta)
\end{equation*}
where $\sigma_P(p) = \sigma_E(p)$ if $p \in E_{\Sigma}$ and $\sigma_P(p) =
\sigma_F(p)$ if $p \in F_{\Sigma}$, and $\Delta$ is given by
\begin{equation*}
  \{\shorttrans{(\omega_1,\sigma_P^{-1}(\overline{p_1}),s_1)}{\overline{m}}{}{(\omega_2,\sigma_P^{-1}(\overline{p_2}),s_2)}
\mid
  \shorttrans{(\omega_1,\overline{p_1},\sigma_S(s_1))}{\overline{m}}{\Delta'}{(\omega_2,\overline{p_2},\sigma_S(s_2))} \}
\ \text{.}
\end{equation*}
Here, $\sigma_P^{-1}$ deletes those events from the event pool that
are not present in the pre-image.

The set of state machine \emph{sentences} $\instSen{\SM[(\Eta,
  \Omega)]}(\Sigma)$ for a state machine signature $\Sigma$ over
$\Eta$ and $\Omega$ consists of the pairs
\begin{equation*}
 \varphi = (s_0 \in S_{\Sigma}, T \subseteq S_{\Sigma} \times (E_{\Sigma} \cup F_{\Sigma}) \times (G(V_{\Eta}) \times A_{\Eta} \times \powerset(F_{\Sigma})) \times S_{\Sigma})
\end{equation*}
where $s_0$ means an initial state and the prioritised set $T$
represents the transitions from a state $s$ with a triggering event
$p$ (either a declared event or a completion event), a guard $g$, an
action $a$, and a set of completion events $\overline{f}$ to another
state $s'$.  We also write $\xtrans{s}{p[g]/a, \overline{f}}{T}{s'}$
for such a transition.  The translation $\sigma(s_0, T)$ of a sentence
$(s_0, T)$ along the signature morphism $\sigma : \Sigma \to \Sigma'$
is given by $(\sigma_S(s_0),\allowbreak \{ \shorttrans{\sigma_S(s_1)}
{\sigma_P(p)[g]/a, \powerset\sigma_F(\overline{f})}{}{\sigma_S(s_2)}
\mid \shorttrans{s_1}{p[g]/a, \overline{f}}{T}{s_2} \})$.  Finally,
the \emph{satisfaction relation} $\Theta
\instmodels[\Sigma]{\SM[(\Eta, \Omega)]} (s_0, T)$ holds if, and only
if $\pi_2(I_{\Theta}) = s_0$ and $\Delta_{\Theta}$ is the least
transition relation satisfying\footnote {Usually, the two cases do not
  overlap, so the two cases are complete characterisations (iff).}
\begin{gather*}
  \shorttrans{(\omega, p :: \overline{p}, s)}{\overline{m} \setminus E_{\Sigma}}{\Delta_{\Theta}}{(\omega', \overline{p} \lhd ((\overline{m} \cap E_{\Sigma}) \cup \overline{f}), s')}
\quad\text{if}\\\qquad
  \exists \shorttrans{s}{p[g]/a, \overline{f}}{T}{s'} \,.\, \omega \models g \land \shorttrans{\omega}{a,\overline{m}}{\Omega}{\omega'}% \land \overline{m} \in M_{\Eta}
\\
  \shorttrans{(\omega, p :: \overline{p}, s)}{ \emptyset}{\Delta_{\Theta}}{ (\omega, \overline{p}, s)}
\quad\text{if}\\\qquad
  \forall \shorttrans{s}{p'[g]/a, \overline{f}}{T}{s'} \,.\, p \neq p' \lor \omega \not\models g
\end{gather*}
where $p :: \overline{p}$ expresses that the first element $p$ from
the pool $\overline{p}$ is extracted, and $\overline{p} \lhd
\overline{p}'$ adds the events in $\overline{p}'$ to the pool
$\overline{p}$ with respect to selection scheme (where completion
events are prioritised).  The messages on a transition in the
structure $\Theta$ are only those that are not accepted by the machine
itself, i.e., not in $E_{\Sigma}$.  The accepted events in
$E_{\Sigma}$ as well as the completion events are added to the event
pool of the target configuration.  When no transition is triggered by
the current event, the event is discarded (this will happen, in
particular, to all superfluously generated completion
events). Checking the satisfaction condition
\begin{equation*}
  \Theta'\reductop\sigma \instmodels[\Sigma]{\SM[(\Eta, \Omega)]} (s_0, T)
\ \iff\ 
  \Theta \instmodels[\Sigma']{\SM[(\Eta, \Omega)]} \sigma(s_0, T)
\end{equation*}
for a state machine signature morphism $\sigma : \Sigma \to \Sigma'$
is straightforward.

\begin{example}
Continuing the previous example for the state machine of
Fig.~\ref{fig:state-machine} defining the behaviour of \uml{ATM}, this
state machine can be represented as the following sentence over this
signature:
\begin{gather*}
  (\uml{Idle},\{
   \longtrans{\uml{Idle}}{\uml{card}[\uml{true}]/\uml{cardId = c}, \emptyset}{T}{\uml{CardEntered}},
\\\phantom{(\uml{Idle}, \{}
   \longtrans{\uml{CardEntered}}{\uml{PIN}[\uml{true}]/\uml{pin = p}, \uml{PINEntered}}{T}{ \uml{PINEntered}},
\\\phantom{(\uml{Idle}, \{}
   \lllongtrans{\uml{PINEntered}}{\uml{PINEntered}[\uml{true}]/\uml{bank.verify(cardId, pin)}, \emptyset}{T}{ \uml{Verifying}},
\\\phantom{(\uml{Idle}, \{}
   \lllongtrans{\uml{Verifying}}{\uml{reenterPIN}[\uml{trialsNum < 3}]/\uml{trialsNum++}, \emptyset}{T}{ \uml{CardEntered}},
   \ldots \})
\ \text{.}
\end{gather*}
In particular, \uml{PINEntered} occurs both as a state and as a
completion event to which the third transition reacts.  The junction
pseudostate for making the decision whether \uml{trialsNum < 3} or
\uml{trialsNum >= 3} has been resolved by combining the transitions.
\end{example}

\subsection{Protocol State Machine Institution}

Protocol state machines differ from behavioural state machines by not
mandating a specific behaviour but just monitoring behaviour: They do
not show guards and effects, but a pre- and a postcondition for the
trigger of a transition.  Moreover, protocol state machines do not
just discard an event that currently does not fire a transition; it is
an error when such an event occurs.

For adapting the state machine institution to protocol state machines
we thus change the \emph{sentences} to
\begin{equation*}
  \varphi = (s_0,e \in S_{\Sigma}, T \subseteq S_{\Sigma} \times (G(V_{\Eta}) \times E_{\Sigma} \times G(V_{\Eta}) \times \powerset(M_{\Eta}) \times \powerset(F_{\Sigma})) \times S_{\Sigma})
\end{equation*}
where $s_0$ is the start state and $e$ a dedicated error state,
 the two occurrences of $G(V_{\Eta})$ represent the pre- and the
post-conditions, and $\powerset(M_{\Eta})$ represents the messages
that have to be sent out in executing the triggering event (protocol
state machines typically do not show completion events).  The
\emph{satisfaction relation} now requires that when an event $e$ is
chosen from the event pool the pre-condition of some transition holds
in the source configuration, its post-condition holds in the target
configuration, and that all messages have been sent out.  Instead of
the second clause of $\Delta_{\Theta}$, discarding an event, 
the error state is targeted when no transition is enabled.

\subsection{Flat State Machine Institution}

Fix an institution of guards.  We now flatten the institutions
$\SM[(\Eta, \Omega)]$ for each action signature $\Eta$ and each action
structure $\Omega$ over $\Eta$ into a single institution
$\SM$.\footnote {This is an instance of a general construction, namely
  the Grothendieck institution \cite{Diaconescu02}.}  The signatures
$\langle\Eta, \Sigma\rangle$ consist of an action signature $\Eta$ and
a state machine signature $\Sigma$, similarly for signature morphisms
as well as for structures $\langle\Omega, \Theta\rangle$.  As
$\langle\Eta, \Sigma\rangle$-sentences we now have both dynamic logic
formulas (over $\Eta$), as well as control transition relations (over
$\Eta$ and $\Sigma$).  Also satisfaction is inherited.  Only the
definition of reducts is new, because they need to reduce state
machine structures along more complex signature morphisms:
$\langle\Omega', \Theta'\rangle\reductop(\eta, \sigma) =
\langle\Omega'\reductop\eta,
\Theta'\reductop\sigma\reductop\eta\rangle$ where
$\Theta''\reductop\eta = (I_{\Theta''}, \{ c''_1,
\eta_M^{-1}(\overline{m}''), c''_2) \mid (c''_1, \overline{m}'',
c''_2) \in \Delta_{\Theta''} \})$.

\section{Determinism, Interleaving, and Refinement}
\label{sec:refinement}

\subsection{Determinstic State Machines}

The transition and action relations are not required to be
functions. Thus a transition may have multiple choices for the same
configuration of states, variables and events. But when moving towards
the implementation, deterministic behaviour is desirable.
\begin{enumerate} 
  \item[i-a)] A prioritised transition set $T$ is called
\emph{syntactically deterministic} if it is a partial function of type
$S_{\Sigma} \times (E_{\Sigma} \cup F_{\Sigma}) \rightharpoonup
G_{\Eta} \times A_{\Eta} \times \powerset(F_{\Sigma}) \times S_{\Sigma}$.

  \item[i-b)] A prioritised transition set $T$ is called
\emph{semantically deterministic} if for any two distinct transitions
$\shorttrans{s}{p[g_1]/a_1, \overline{f}_1}{T}{s_1}$ and
$\shorttrans{s}{p[g_2]/a_2, \overline{f}_2}{T}{s_2}$ sharing the same
pre-state $s$ and trigger event $p$, their guards must be disjoint,
that is, there is no $\omega : V_{\Eta} \to \Val$ satisfying both
$g_1$ and $g_2$.

  \item[ii)] A transition relation $\Delta_\Theta$ is called
\emph{deterministic} if and only if it is a partial function of type
$C \rightharpoonup \powerset(M_{\Eta}) \times C$.\footnote{Note that
  this function is total if $\Delta_\Theta$ satisfies some sentence.
  This originates from the discarding of events that can not be
  processed in the current state and configuration, which is again a
  transition.}

  \item[iii)] A state machine $(\Omega, \Theta)$ is called
\emph{deterministic} if and only if the corresponding transition
relation and action relation are deterministic.

%% \\ $\forall a \in
%% A_{\Eta_1} \cap A_{\Eta_2}$, $\omega_1,\omega'_1 \in \Omega_1$,
%% $\omega_2,\omega'_2 \in \Omega_2$, $\overline{m_1}\in M_{\Eta_1}$
%% ,$\overline{m_2}\in M_{\Eta_2}$: \\ $(\Omega_1(\omega_1,a)=
%% (\overline{m_1},\omega'_1),\Omega_2(\omega_2,a)=
%% (\overline{m_2},\omega'_2), \omega_1 \reductop_{V_{\Eta_1}\cup
%%   V_{\Eta_1}} = \omega_2 \reductop_{V_{\Eta_1}\cap V_{\Eta_2}})$\\
%% $\Rightarrow \overline{m_1}=\overline{m_2} \,\wedge\, \omega_1' \reductop_{V_{\Eta_1}\cap V_{\Eta_2}} =
%% \omega_2' \reductop_{V_{\Eta_1}\cap V_{\Eta_2}}$.
\end{enumerate}

The transition relation $\Delta_\Theta$ is defined by $\Omega$ and
$T$. So it is justified to expect some inheritance of determinism
between those.

\begin{theorem}
If $T$ is syntactically or semantically deterministic and $\Omega$ is
deterministic, then $\Delta_\Theta$ is also deterministic.
\end{theorem}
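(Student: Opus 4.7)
The plan is to show that for every configuration $c = (\omega, \overline{p}, s) \in C$ there is at most one pair $(\overline{m}, c') \in \powerset(M_{\Eta}) \times C$ with $\shorttrans{c}{\overline{m}}{\Delta_{\Theta}}{c'}$. Since $\Delta_{\Theta}$ is defined as the \emph{least} relation closed under the two inference clauses given in the satisfaction relation for $(s_0, T)$, it suffices to verify that, for every configuration $c$, at most one of these clauses applies and, whenever it applies, it fixes $(\overline{m}, c')$ uniquely.

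First I would dispose of the trivial case where $\overline{p}$ is empty: neither clause can fire because both require an extraction $p :: \overline{p}$. If $\overline{p}$ is non-empty, the selection scheme (with completion events prioritised) determines the head $p$ uniquely, so the event extracted is fixed by $c$ alone. I would then split into two subcases according to whether some transition in $T$ is enabled at $(s, p, \omega)$. The two clauses are by construction mutually exclusive: the firing clause requires the existence of $\shorttrans{s}{p[g]/a, \overline{f}}{T}{s'}$ with $\omega \models g$, while the discarding clause requires the negation of this existence. Hence exactly one of the two applies.

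In the discarding subcase, the successor is tautologically $(\omega, \overline{p}, s)$ with message set $\emptyset$, so uniqueness is immediate. In the firing subcase, I would use the hypothesis on $T$ to conclude that the applicable transition of $T$ is unique. Under syntactic determinism, $T$ is a partial function on $S_{\Sigma} \times (E_{\Sigma} \cup F_{\Sigma})$, so the tuple $(g, a, \overline{f}, s')$ is uniquely determined by $(s, p)$; if, moreover, $\omega \models g$, exactly one $T$-transition applies. Under semantic determinism, two distinct $T$-transitions out of $(s, p)$ have disjoint guards, so at most one of their guards can be satisfied by $\omega$, again giving uniqueness of the applicable $T$-transition. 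Either way, $(g, a, \overline{f}, s')$ is fixed. The determinism of $\Omega$ then fixes the pair $(\overline{m}, \omega')$ with $\shorttrans{\omega}{a, \overline{m}}{\Omega}{\omega'}$, since by assumption $\Omega$ is a partial function $|\Omega| \times A_{\Eta} \rightharpoonup \powerset(M_{\Eta}) \times |\Omega|$. The emitted message set $\overline{m} \setminus E_{\Sigma}$ and the successor configuration $(\omega', \overline{p} \lhd ((\overline{m} \cap E_{\Sigma}) \cup \overline{f}), s')$ are thus determined uniquely by $c$.

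I expect the main obstacle to be a careful bookkeeping argument rather than a conceptual difficulty: one must verify that the selection scheme $p :: \overline{p}$ and the insertion $\overline{p} \lhd (\cdot)$ behave as deterministic operations on event pools — in particular, that the prioritisation of completion events leaves no residual choice — and that the minimality of $\Delta_{\Theta}$ prevents any spurious extra transitions from appearing beyond those forced by the two clauses. Once this is pinned down, the case analysis above yields that $\Delta_{\Theta}$ is a partial function $C \rightharpoonup \powerset(M_{\Eta}) \times C$, which is the definition of determinism for the transition relation, completing the proof.
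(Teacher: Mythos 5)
Your proof is correct and follows essentially the same route as the paper's (much terser) argument: fix a configuration, use syntactic or semantic determinism of $T$ to identify at most one applicable transition, and then use determinism of $\Omega$ to fix the emitted messages and the successor valuation. Your extra care about the discard clause, the mutual exclusivity of the two defining clauses, and the deterministic behaviour of the event-pool selection scheme only makes explicit what the paper leaves implicit.
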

\begin{proof}
Consider a configuration $(\omega, p::\overline{p}, s)$. If there is
any transition then the new state $s'$ and executed action $a$ are
determined by $T(s, p) = (g, a, \overline{f}, s')$ (if defined) in the
syntactic case. The sent message $\overline{m}$ and the new
configuration of the variables $\omega'$ result from $\Omega((\omega,
p, s), a)=(\overline{m}, \omega')$. In the semantic case, at most one
guard can be enabled, hence at most one transition in $T$ can fire.
\end{proof}

% Whilst the class of state machine structures allows arbitrary
% transition relation, the transition relation of deterministic
% transition systems and action functions is a total transition
% function. This originates from the discarding of events that can not
% be processed in the current state and configuration, which is again
% a transition.

\subsection{Interleaving Product of State Machines}

Inside the flat state machine institution $\SM$ we can consider the
composition of state machines over different action signatures.  The
composition captures the interplay between different state machines
and their communication.  The different action signatures represent
the local views of the state machines.  Given two state machine
signatures $\langle\Eta_1, \Sigma_1\rangle$ and $\langle\Eta_2,
\Sigma_2\rangle$ of $\SM$ with $E_{\Sigma_1} \cap E_{\Sigma_2} =
\emptyset$, and $S_{\Sigma_1} \cap S_{\Sigma_2} = \emptyset$, we
combine these into a single signature $\langle\hat{\Eta},
\hat{\Sigma}\rangle$ of $\SM$ by taking the component-wise union for
the guard, actions, messages, and variables, the union of events and
states for the events, and the product of the state sets for the
states. Now consider two state machine structures $(\Omega_1,
\Theta_1)$ over $\langle\Eta_1, \Sigma_1\rangle$ and $(\Omega_2,
\Theta_2)$ over $\langle\Eta_2, \Sigma_2\rangle$, respectively.  Their
\emph{interleaving product} is given by
\begin{equation*}
  \langle\Omega_1, \Theta_1\rangle \pll \langle\Omega_2,
  \Theta_2\rangle = (\Omega_1 \pll \Omega_2, \Theta_1 \pll \Theta_2)
\qquad\text{where}
\end{equation*}
\begin{itemize}[label={--},topsep=2pt,itemsep=0pt,leftmargin=*]
  \item $\Omega_1 \pll \Omega_2$ is given by
$\shorttrans{\omega}{a,\overline{m}}{\Omega_1 \pll \Omega_2}{\omega'}$
if for some $i\in\{1,2\}$: $a \in A_{\Eta_i}$ and
$\shorttrans{\omega\reductop_{V_{\Eta_i}}}{a,\overline{m}}{\Omega_i}{\omega'\reductop_{V_{\Eta_i}}}$
and for $i \neq j \in \{ 1, 2 \} : \omega\reductop_{V_{\Eta_j}
  \setminus V_{\Eta_i}} = \omega'\reductop_{V_{\Eta_j} \setminus
  V_{\Eta_i}}$

  \item $\Theta_1 \pll \Theta_2 = (I_{\Theta_1} \pll I_{\Theta_2},
\Delta_{\Theta_1} \pll \Delta_{\Theta_2})$ with
\begin{equation*}
  I_{\Theta_1} \pll I_{\Theta_2}
=
  (\{\omega : V_{\Eta_1} \cup V_{\Eta_2} \to \Val \mid \forall j \in \{ 1, 2\} \,.\, \omega\reductop_{V_{\Eta_j}} \in \Gamma_j \}, (s_1, s_2))
\end{equation*}
for $I_{\Theta_i} = (\Gamma_i, s_i)$, and
\bgroup\mathindent0pt
\begin{gather*}
  \longtrans{(\omega, p :: (\overline{p}_1 \cup \overline{p}_2), (s_1, s_2))}
            {\overline{m} \setminus E_{\hat{\Sigma}}}
            {\Delta_{\Theta_1} \pll \Delta_{\Theta_2}}
            {(\omega', (\overline{p}_1 \cup \overline{p}_2) \lhd ((\overline{p}' \cup \overline{m}) \cap E_{\hat{\Sigma}}), (s_1', s_2'))}
\\\qquad
\begin{array}{@{}ll@{}}
\text{iff } & \exists i \in \{ 1, 2 \} \,.\, \shorttrans{(\omega\reductop_{V_{\Eta_i}}, p :: \overline{p}_i, s_i)}{  \overline{m}}{\Delta_{\Theta_i}}{(\omega'\reductop_{V_{\Eta_i}}, \overline{p}_i \lhd \overline{p}', s_i')} \land{}\\
            &\qquad \forall j \in \{ 1, 2 \} \setminus \{ i \} \,.\, (\omega\reductop_{V_{\Eta_j}}, \overline{p}_j, s_j)=(\omega'\reductop_{V_{\Eta_j}}, \overline{p}_j, s_j')% \in (V_{\Eta_j}\to \Val) \times E_{\Sigma_j} \times S_{\Eta_j}
\ \text{.}
\end{array}
\end{gather*}
\egroup
\end{itemize}

There is also a syntactic version of the interleaving product: %
given sentences $(s_0^1, T_1)$ and $(s_0^2, T_2)$, their interleaving
$(s_0^1, T_1) \pll (s_0^2,T_2)$ is given by $((s_0^1,s_0^2), T)$ with
\begin{equation*}
  \shorttrans{(s_1, s_2)}{p[g]/a, \overline{f}}{T}{(s'_1, s'_2)}
\ \text{iff}\ 
  \exists i \in \{ 1, 2 \} \,.\, \shorttrans{s_i}{p[g]/a, \overline{f}}{T_i}{s'_i}
\land
  \forall j \in \{ 1, 2 \} \setminus \{ i \} \,.\, s_j = s'_j
\ \text{.}
\end{equation*}
The syntactic version is compatible with the semantic interleaving product:
\begin{theorem}\label{thm:compatible-interleaving}
If  $\langle\Omega_i, \Theta_i\rangle \models (s_0^i, T_i)$ ($i=1,2$),
then $\langle\Omega_1, \Theta_1\rangle \pll \langle\Omega_2,
  \Theta_2\rangle\models (s_0^1, T_1) \pll (s_0^2,T_2)$.
\end{theorem}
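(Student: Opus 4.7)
The plan is to verify the two conditions dictated by the state-machine satisfaction relation for the combined sentence $\varphi = ((s_0^1, s_0^2),\, T_1 \pll T_2)$ with respect to the product structure $(\Omega_1 \pll \Omega_2,\, \Theta_1 \pll \Theta_2)$: namely (i) that the initial control state equals $(s_0^1, s_0^2)$, and (ii) that $\Delta_{\Theta_1} \pll \Delta_{\Theta_2}$ is the least transition relation satisfying the firing and discarding clauses. Condition (i) is immediate by unfolding the construction of $I_{\Theta_1} \pll I_{\Theta_2}$: since $\pi_2(I_{\Theta_i}) = s_0^i$ follows from $\Theta_i \models (s_0^i, T_i)$, the defining clause $I_{\Theta_i} = (\Gamma_i, s_i)$ gives $s_i = s_0^i$, so the product's control component is $(s_0^1, s_0^2)$ by construction.

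For condition (ii), I would show both inclusions. For the generated-transitions-are-present direction, take a firing instance of $\varphi$: $\xtrans{(s_1, s_2)}{p[g]/a, \overline{f}}{T_1 \pll T_2}{(s'_1, s'_2)}$ together with $\omega \models g$ and $\xtrans{\omega}{a, \overline{m}}{\Omega_1 \pll \Omega_2}{\omega'}$. By the syntactic definition of $T_1 \pll T_2$, exactly one component -- say $i = 1$ -- contributes a transition $\xtrans{s_1}{p[g]/a, \overline{f}}{T_1}{s'_1}$ with $s'_2 = s_2$, and by the semantic definition of $\Omega_1 \pll \Omega_2$ we have $a \in A_{\Eta_1}$ with $\xtrans{\omega\reductop_{V_{\Eta_1}}}{a, \overline{m}}{\Omega_1}{\omega'\reductop_{V_{\Eta_1}}}$ and the complementary variables unchanged. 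Since $\Theta_1 \models (s_0^1, T_1)$, the firing clause of the component sentence places the corresponding transition into $\Delta_{\Theta_1}$, which the interleaving construction lifts to the required transition in $\Delta_{\Theta_1} \pll \Delta_{\Theta_2}$. The discarding clause is handled symmetrically: if no transition of $T_1 \pll T_2$ is enabled at $(s_1, s_2)$ for $p$, then no transition of either $T_i$ is enabled at $s_i$ for $p$, so each $\Delta_{\Theta_i}$ contributes the corresponding self-loop, and interleaving yields the product stay.

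For the minimality direction, every transition of $\Delta_{\Theta_1} \pll \Delta_{\Theta_2}$ arises, by its defining clause, from a transition in some $\Delta_{\Theta_i}$, which by the least-fixpoint characterisation of $\Theta_i \models (s_0^i, T_i)$ is either a firing instance of $T_i$ or a discard; in each case, the corresponding instance of the firing or discarding clause for $T_1 \pll T_2$ and $\Omega_1 \pll \Omega_2$ generates exactly that product transition. The main obstacle is the careful event-pool bookkeeping: one must justify that $p :: (\overline{p}_1 \cup \overline{p}_2)$ decomposes as $p :: \overline{p}_i$ for the acting component consistently with the selection scheme $\lhd$ and the priority for completion events, and that messages in $E_{\hat\Sigma}$ which re-enter the pool are accounted for the same way in both the componentwise lift and the combined-sentence clauses. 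A related delicate point arises in asymmetric situations where one component can fire $p$ while the other has no matching trigger: here the discarding self-loop of the passive component lifts to a product ``stay'' that must be matched by the discarding clause applied only to that component's portion of the combined sentence, which is why the disjointness conditions $E_{\Sigma_1} \cap E_{\Sigma_2} = \emptyset$ and the preservation-of-internal-messages constraint play an essential role.
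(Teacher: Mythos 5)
Your overall strategy coincides with the paper's: check the initial control state, then unfold the defining clause of $\Delta_{\Theta_1}\pll\Delta_{\Theta_2}$, use $\langle\Omega_i,\Theta_i\rangle\models(s_0^i,T_i)$ to replace the component transition by the corresponding instance of the clauses for $T_i$ over $\Omega_i$, and reassemble this into the clauses of $(s_0^1,T_1)\pll(s_0^2,T_2)$ over $\Omega_1\pll\Omega_2$. The paper presents exactly this as a chain of equivalences, restricted for simplicity to the firing clause; you additionally spell out the initial-configuration check, the two inclusions giving leastness, and the discard clause, which is a legitimate (and slightly more complete) rendering of the same argument rather than a different route.

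One caveat: your treatment of the discard case is imprecise as stated. In the product, only the machine whose signature owns the trigger $p$ can contribute a discard self-loop; the other component never has $p$ in its event pool at all, since its configurations only carry events from $E_{\Sigma_j}\cup F_{\Sigma_j}$ and $E_{\Sigma_1}\cap E_{\Sigma_2}=\emptyset$, $S_{\Sigma_1}\cap S_{\Sigma_2}=\emptyset$. So it is not the case that ``each $\Delta_{\Theta_i}$ contributes the corresponding self-loop''; the passive component is simply held fixed by the defining clause of $\pll$. For the same reason, the ``asymmetric situation'' you describe (one component able to fire $p$ while the other discards it) cannot occur, and it is good that it cannot: had such a lifted discard existed, it would violate leastness, because the combined sentence has no per-component discard clause --- its discard clause only applies when \emph{no} $T_1\pll T_2$ transition is enabled for $p$ at $(s_1,s_2)$. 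So disjointness is indeed what makes the bookkeeping work, but via keeping $p$ out of the passive component's pool, not via the matching mechanism you sketch. This is a local slip in a case the paper itself leaves implicit; the main line of your proof stands.
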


\begin{example}
Consider the composite structure diagram in Fig.~\ref{fig:system},
showing instances \uml{atm} and \uml{bank} of the \uml{ATM} and
\uml{Bank} components, respectively, that are connected through their
\uml{bankCom} and \uml{atmCom} ports.  In execution, \uml{atm} and
\uml{bank} will exchange messages, as prescribed by their state
machines, and this exchange is reflected by the interleaving product
which internalises those events that are part of the common signature.
On the other hand, messages to the outside, i.e., through the
\uml{userCom} port are still visible.
\end{example}

A system resulting from an interleaving product $\langle\Omega_1,
\Theta_1\rangle \pll \langle\Omega_2,\Theta_2\rangle$ represents a
state machine in our notation. Thus it can be again part of an
interleaving product
\begin{equation*}
  \langle(\langle\Omega_1, \Theta_1\rangle \pll \langle\Omega_2,\Theta_2\rangle)\pll \langle\Omega_3,\Theta_3\rangle \rangle
\ \text{.}
\end{equation*}

The interleaving product meets the intuitive algebraic properties. Due
to the disjoint event sets each event can only trigger at most one
machine. Messages are stripped off the events which can be processed
by either of the inner machines, and remaining messages are sent to
the third machine, which also extracts its corresponding events as
illustrated in Fig.~\ref{fig:assoc}.
\begin{figure}[!Ht]
\centering
\begin{tikzpicture}
\tikzstyle{surround} = [fill=blue!10,thick,draw=black,rounded corners=2mm]
\node[draw] (M1) at (1,0) {$M_1$};
\node[inner sep=0,minimum size=0] (k) at (2,0){}; % invisible node
\node[inner sep=0,minimum size=0] (k2) at (3,0){}; % invisible node
\node[inner sep=0,minimum size=0] (k3) at (3,1.6){}; % invisible node
\node[inner sep=0,minimum size=0] (k4) at (3,2.4){}; % invisible node
\node[draw] (M2) at (5,0) {$M_2$};
\node[draw] (M3) at (5,1.6) {$M_3$};
\node[draw] (M21) at (5,-1) {$(\Omega_1 \pll \Omega_2, \Theta_1 \pll \Theta_2)$};

\draw[-] (M1) -- (k) node[pos=0.5,below] (m) {$\overline{m}$};
\draw[->, bend right] (k) -- node[below right](bk){$\overline{m} \cap E_{\Sigma_1}$} ++(0,-1)  -| (M1);
\draw[-] (k) -- (k2) node[pos=0.5,above right]{};
\draw[->] (k) -- (M2) node[pos=0.5,below right]{$\overline{m} \cap E_{\Sigma_2}$};
\draw[-] (k2) -- (k3) node[pos=0.5,right]{$\overline{m} \backslash (E_{\Sigma_1} \cup E_{\Sigma_2})$};
\draw[->] (k3) -- (M3) node[pos=0.5,midway, above]{$\overline{m} \cap E_{\Sigma_3}$};
\draw[->,dashed] (k3) -- (k4) node[above]{$\overline{m} \backslash ((E_{\Sigma_1} \cup E_{\Sigma_2})  \cup E_{\Sigma_3})$};
\begin{pgfonlayer}{background} 
  \node[surround] (background) [fit = (M1) (M2) (bk) (M21)] {};
\end{pgfonlayer}
\end{tikzpicture}
\caption{Messages sent between three machines on transition in machine $M_1$}\label{fig:assoc}
\end{figure}
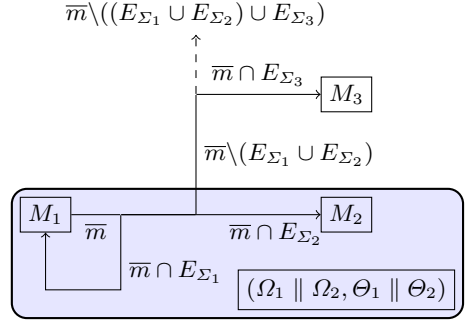  
Hence it is impossible that the inner machines consume an event that
can also be processed by the third machine. The same behaviour occurs,
if the first machine sends a message to a system of the two remaining
machines. Thus the distinction in `inner' and `outer' machines becomes
obsolete and the interleaving product is associative.  Since each
machine extracts the events present in its event set, it is not
required to consider the order of the machines, and hence the
interleaving product is commutative.  Finally, we can regard a state
machine with no events, no messages, and only one state. In an
interleaving product this nearly empty machine would have no effect on
the behaviour of the other machine, and thus behaves as a neutral
element.

\begin{theorem}\label{th:monoid}
The set of state machine structures (over all signatures) with
interleaving product $\pll$ forms a discrete symmetric monoidal
category, which is a ``commutative monoid up to isomorphism''.
\end{theorem}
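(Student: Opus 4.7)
The plan is to exhibit the three canonical isomorphisms of a symmetric monoidal structure — a unitor, a braiding, and an associator — on the class of state machine structures, and to verify the monoid laws up to those isomorphisms. Because the theorem calls the resulting category \emph{discrete}, there is at most one isomorphism between any two structures compatible with the underlying bijections of states, valuations, events, and pools; once the three canonical isomorphisms exist, the pentagon and hexagon coherence diagrams then commute automatically. The substantive work reduces to (i)~fixing a unit object, (ii)~checking associativity and commutativity of $\pll$ up to the obvious reindexings, and (iii)~verifying compatibility with the message stripping and event-pool update.

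For the unit object I would take $\mathbf{1}$ with empty action signature $A = M = V = \emptyset$, empty event sets $E = F = \emptyset$, a single state $S = \{*\}$, the trivial action relation on the unique valuation $\omega_\emptyset$ of the empty variable set, and state-machine structure $(\{(\omega_\emptyset, *)\}, \emptyset)$. Because $E_{\mathbf{1}} = \emptyset$, forming $\Theta \pll \mathbf{1}$ neither strips nor adds any messages or events, and the projection $(s, *) \mapsto s$ supplies the unitor. Commutativity is essentially by symmetry of the definition: the existential over $i \in \{1,2\}$ together with $V_{\Eta_1} \cup V_{\Eta_2} = V_{\Eta_2} \cup V_{\Eta_1}$ and the analogous equalities for events and messages make swapping the arguments correspond exactly to the braiding $(s_1, s_2) \leftrightarrow (s_2, s_1)$ on state pairs.

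For associativity, I would align $(\Theta_1 \pll \Theta_2) \pll \Theta_3$ and $\Theta_1 \pll (\Theta_2 \pll \Theta_3)$ along the canonical reassociation $((s_1, s_2), s_3) \leftrightarrow (s_1, (s_2, s_3))$ of state triples and the trivial reassociation of the variable and event unions. The action interleaving $\Omega_1 \pll \Omega_2 \pll \Omega_3$ matches immediately, since $A_{\hat\Eta}$, $V_{\hat\Eta}$ and $E_{\hat\Sigma}$ are independent of parenthesization and the existential selecting a single firing component $i$ yields the same three-way disjunction in either grouping (using iterated pairwise disjointness of $E_{\Sigma_1}, E_{\Sigma_2}, E_{\Sigma_3}$, which I would first confirm propagates from the binary hypothesis).

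The main obstacle will be the transition relation: one must check that the message stripping $\bar m \setminus E_{\hat\Sigma}$ and the event-pool update $\bar p \lhd (\bar m \cap E_{\hat\Sigma})$ compose associatively. Under $(\Theta_1 \pll \Theta_2) \pll \Theta_3$ the inner product first strips the messages in $E_{\Sigma_1} \cup E_{\Sigma_2}$ and feeds the appropriate ones back into the inner pool via $\lhd$, before the outer product strips $E_{\Sigma_3}$; under the other grouping the order is reversed. A case analysis on the membership of each emitted message in each $E_{\Sigma_k}$ — precisely the routing picture of Figure~\ref{fig:assoc} — shows that both orderings produce the same external message set and the same final pool, provided the pool operator $\lhd$ behaves associatively in its second argument under the fixed selection scheme. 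I would isolate that last fact as a standalone lemma about the scheduling scheme before plugging it into the interleaving computation, after which the pentagon and hexagon reduce to equalities of bijections on configurations and hold trivially in the discrete setting.
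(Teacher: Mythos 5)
Your proposal is correct and follows essentially the same route as the paper's proof: the same unit machine (empty action signature and event sets, a single state, empty transition relation), commutativity read off from the symmetry of the defining disjunction, associativity by unfolding both groupings of $\pll$ into the same three-way case distinction on which component fires while checking that the repeated message stripping and pool update collapse to a single filtering against $E_{\Sigma_1}\cup E_{\Sigma_2}\cup E_{\Sigma_3}$, and coherence dismissed because the structural isomorphisms are canonical set-theoretic reindexings of states, valuations and pools. Your explicit flagging of the auxiliary facts the paper leaves implicit (pairwise disjointness of the event sets and the behaviour of $\lhd$ under the selection scheme) is a welcome refinement rather than a different approach.
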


It is desirable that the interleaving product of two deterministic
machines preserves this determinism.  The new action function is
determined by the two old ones in such a way, that the new
configuration is taken from the new configuration of the triggered
sub-machine, which is deterministic, and the missing variable
configuration remain untouched. Thus the new action relation is
deterministic.  The same goes for the transition relation. The sent
messages and configuration are determined by the (as argued above)
deterministic action relation and the new state and events result from
the triggered sub-machine. However, we need the following prerequisite:
 Two action relations $\Omega_1$, $\Omega_2$
are called \emph{compatible} if $\Omega_1\reductop_{\Eta_1\cap\Eta_2}=
\Omega_2\reductop_{\Eta_1\cap\Eta_2}$.

\begin{theorem}\label{th:InhDetIP}
Let $(\Omega_1, \Theta_1)$ and $(\Omega_2, \Theta_2)$ be deterministic
state machines with both action relations compatible. Then $(\Omega_1
\pll \Omega_2, \Theta_1 \pll \Theta_2)$ is also deterministic.
\end{theorem}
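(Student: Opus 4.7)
The plan is to split the goal along the definition of the interleaving product: first show that the product action relation $\Omega_1 \pll \Omega_2$ is a partial function $|\Omega_1 \pll \Omega_2| \times A_{\hat\Eta} \partto \powerset(M_{\hat\Eta})\times |\Omega_1 \pll \Omega_2|$, and then, using this, show that $\Delta_{\Theta_1} \pll \Delta_{\Theta_2}$ is a partial function $\hat{C} \partto \powerset(M_{\hat\Eta}) \times \hat{C}$, where $\hat C$ is the configuration set of the combined signature $\hat\Sigma$.

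For the action relation I would fix a pair $(\omega, a)$ and do a case analysis on which of the signatures contains $a$. If $a \in A_{\Eta_i}\setminus A_{\Eta_j}$ for $\{i,j\}=\{1,2\}$, only the $i$-clause of the definition can fire; since $\Omega_i$ is deterministic it uniquely determines $\overline m$ and $\omega'\reductop_{V_{\Eta_i}}$, while $\omega'\reductop_{V_{\Eta_j}\setminus V_{\Eta_i}}=\omega\reductop_{V_{\Eta_j}\setminus V_{\Eta_i}}$ is forced by the side condition; together these fix $\omega'$ on $V_{\hat\Eta}=V_{\Eta_1}\cup V_{\Eta_2}$. The delicate case is $a\in A_{\Eta_1}\cap A_{\Eta_2}$: here both clauses may fire, and I would use compatibility $\Omega_1\reductop_{\Eta_1\cap\Eta_2}=\Omega_2\reductop_{\Eta_1\cap\Eta_2}$ together with the definition of action reduct to conclude that both branches produce the same $\overline m$ and the same evolution on the shared variables $V_{\Eta_1}\cap V_{\Eta_2}$, and by the frame conditions the same evolution on the private variables on either side, so the two outcomes coincide. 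This step is where the compatibility hypothesis is really used and it is the main obstacle: one must argue carefully that the frame condition ``other variables untouched'' in the interleaving definition, combined with compatibility on the intersection, gives a single well-defined $\omega'$ regardless of which index $i$ is chosen.

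For the transition relation I would fix a source configuration $(\omega, p::(\overline p_1 \cup \overline p_2), (s_1,s_2))$ and observe that by the signature disjointness requirements of $\SM$ ($E_{\Sigma_1}\cap E_{\Sigma_2}=\emptyset$ and $S_{\Sigma_1}\cap S_{\Sigma_2}=\emptyset$, which via the naming of completion events after states also yields disjoint $F_{\Sigma_i}$), the event $p$ belongs to the event set of at most one component, and only that component's clause in the definition of $\Delta_{\Theta_1}\pll\Delta_{\Theta_2}$ can fire. Hence the choice of $i$ is forced by $p$, and by the assumed determinism of $\Delta_{\Theta_i}$ the triple $(\overline m, \omega'\reductop_{V_{\Eta_i}}, \overline p_i \lhd \overline p', s_i')$ is uniquely determined; the other component's state, event list, and variable valuation are pinned down by the identity side condition. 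Combining these yields a unique successor configuration and a unique emitted message set $\overline m\setminus E_{\hat\Sigma}$.

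Putting the two parts together, both the action relation and the transition relation of $(\Omega_1\pll\Omega_2, \Theta_1\pll\Theta_2)$ are partial functions of the required type, so the interleaved state machine is deterministic in the sense of item~(iii) of the definition.
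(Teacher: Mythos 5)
Your plan is correct and follows essentially the same route as the paper's proof: it splits determinism of $(\Omega_1\pll\Omega_2,\Theta_1\pll\Theta_2)$ into determinism of the product action relation (using compatibility of $\Omega_1,\Omega_2$ to reconcile the shared-action case, identity on untouched variables elsewhere) and determinism of the product transition relation (disjointness of the event sets forces the triggered component, whose own determinism plus the frame condition fixes the successor configuration and emitted messages). The only cosmetic difference is that you argue uniqueness directly from a fixed source configuration while the paper assumes two transitions and derives their equality (w.l.o.g.\ in machine~1), which is the same argument.
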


Using a slightly modified version of the interleaving product
construction where messages of shared actions leading to compatible
states are united, instead of generating two separate transitions, we
can prove:

\begin{theorem}\label{thm:amalg-action}
The action institution admits weak amalgamation for pushout squares
with injective message mappings.
\end{theorem}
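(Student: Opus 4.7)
The plan is to construct an amalgamation explicitly, modelled on the ``modified interleaving product'' alluded to in the paragraph preceding the theorem. First I would observe that $\instSig{\ENV}$ is equivalent to a threefold product of $\category{Set}$, so pushouts exist and are computed component-wise. Fix a pushout with corners $\Eta, \Eta_1, \Eta_2, \Eta_R$, morphisms $\eta_i: \Eta \to \Eta_i$ and $\sigma_i : \Eta_i \to \Eta_R$, and $(\eta_i)_M$ injective by hypothesis. Each of $V_R, A_R, M_R$ is then a $\category{Set}$-pushout, so $V_R$-valuations $\omega_R$ correspond bijectively to pairs $(\omega_1, \omega_2)$ of $V_i$-valuations that agree on $V$. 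The injectivity of the two $M$-components propagates, by the standard fact that $\category{Set}$-pushouts preserve monomorphisms along monomorphisms, to injectivity of both $(\sigma_1)_M$ and $(\sigma_2)_M$; this will be used repeatedly.

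Given compatible $\Omega_1, \Omega_2$ with common reduct $\Omega_0 = \Omega_1\reductop\eta_1 = \Omega_2\reductop\eta_2$, I would define $\Omega_R$ by case analysis on the action label $a_R \in A_R$. If $a_R = (\sigma_1)_A(a_1)$ with $a_1 \notin (\eta_1)_A(A)$, then for every $\Omega_1$-transition $\shorttrans{\omega_1}{a_1,\overline{m}_1}{\Omega_1}{\omega_1'}$ I add to $\Omega_R$ every transition $\shorttrans{(\omega_1,\omega_2)}{a_R,(\sigma_1)_M(\overline{m}_1)}{\Omega_R}{(\omega_1',\omega_2')}$ whose source and target are compatible pairs and for which $\omega_2, \omega_2'$ agree on $V_2 \setminus (\eta_2)_V(V)$ (values on the shared part are then forced by compatibility). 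The case of actions lying only in the image of $(\sigma_2)_A$ is symmetric. For a shared action $a_R = (\sigma_1 \circ \eta_1)_A(a_0) = (\sigma_2 \circ \eta_2)_A(a_0)$, I pair each $\Omega_1$-transition on $(\eta_1)_A(a_0)$ with each $\Omega_2$-transition on $(\eta_2)_A(a_0)$ whose source and target yield compatible pairs, and add the $\Omega_R$-transition with message set $(\sigma_1)_M(\overline{m}_1) \cup (\sigma_2)_M(\overline{m}_2)$; existence of matching pairs is guaranteed by the compatibility of $\Omega_1, \Omega_2$ via $\Omega_0$.

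Next I would verify that $\Omega_R\reductop\sigma_i = \Omega_i$ for $i \in \{1,2\}$. Transitions contributed by the non-shared cases are immediate from injectivity of $(\sigma_i)_M$. For a shared-action transition the crux is the identity
\begin{equation*}
  (\sigma_1)_M^{-1}\bigl((\sigma_1)_M(\overline{m}_1) \cup (\sigma_2)_M(\overline{m}_2)\bigr) = \overline{m}_1\,,
\end{equation*}
and its $\sigma_2$-counterpart. Using injectivity of $(\sigma_1)_M$ and the explicit description of $\category{Set}$-pushouts along injections, $(\sigma_1)_M^{-1}((\sigma_2)_M(\overline{m}_2))$ unfolds to $(\eta_1)_M((\eta_2)_M^{-1}(\overline{m}_2))$; and by compatibility $(\eta_2)_M^{-1}(\overline{m}_2) = (\eta_1)_M^{-1}(\overline{m}_1)$, the message set of the corresponding transition in $\Omega_0$, whose image under $(\eta_1)_M$ is contained in $\overline{m}_1$. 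The opposite inclusion is immediate.

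The principal obstacle is precisely this shared-action calculation. Without injectivity of the message components, two distinct messages of $M_1$ or $M_2$ could be identified in $M_R$, and then the pullback $(\sigma_i)_M^{-1}$ would pick up spurious ``phantom'' messages not present in the original transition, breaking the reduct equality and the amalgamation; this is exactly why the hypothesis is needed. The variable-side bookkeeping is comparatively routine: a non-shared action is invisible in $\Omega_0$ and may freely move values on variables in $V$, but these movements are absorbed by choosing the non-active side's valuation to match on $(\eta_j)_V(V)$ via the compatibility constraint.
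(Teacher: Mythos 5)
Your construction is essentially the paper's own proof: your case analysis (non-shared actions with a frame condition on the other component's private variables, shared actions paired with the union of the translated message sets) is exactly the modified interleaving product $\pll'$ that the paper uses to define the amalgamation, and you invoke injectivity of the message mappings at the same point, namely to show that pulling the united message set back along each pushout leg recovers the original message set so that $\Omega_R$ reduces to $\Omega_1$ and $\Omega_2$. You merely spell out the reduct verification that the paper leaves implicit, so this is the same argument in more detail.
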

%\todoall{amalgamation for state machine institution}

\subsection{Institutional Refinement of State Machines}

We have defined an institution capturing both behavioural and protocol
state machines via different sentences.  With the machinery developed
so far, we can now apply the institution independent notion of
refinement to our institution of state machines.  The simplest such
notion is just model class inclusion, that is, a theory $T_1$ refines
to $T_2$ (written $T_1 \leadsto T_2$) if
$\instMod{\SM}(T_2)\subseteq\instMod{\SM}(T_1)$. (Note that state
machines are theories consisting typically of one sentence only.)

However, this is too simple to cover the phenomenon of \emph{state
  abstraction}, where several states (like \uml{Idle},
\uml{CardEntered}, \uml{PinEntered} and \uml{Verified} in
Fig.~\ref{fig:state-machine}) in a more concrete state machine can be
abstracted to one state (like \uml{Idle} in Fig.~\ref{fig:psm}) in a
more abstract state machine. This situation can be modelled using the
institution independent notion of \emph{translation} of a theory $T$
along a signature morphism $\sigma : \mathit{sig}(T)\to\Sigma$,
resulting in a structured theory $\sigma(T)$ which has signature
$\Sigma$, while the model class is $\{ M \in \instMod{\SM}(\Sigma)
\mid M\reductop_\sigma \in \instMod{\SM}(T) \}$, i.e., models are
those $\Sigma$-models that reduce (via $\sigma$) to a $T$-model.
Moreover, sometimes we want to \emph{drop events} (like \uml{card} in
Fig.~\ref{fig:state-machine}) when moving to a more abstract state
machine.  This can be modelled by a notion dual to translation, namely
\emph{hiding}. Given a theory $T$ and a signature morphism $\theta:
\Sigma\to \mathit{sig}(T)$, the structured theory $\theta^{-1}(T)$ has
signature $\Sigma$, while the model class is $\{ M\reductop_\theta \in
\instMod{\SM}(\Sigma) \mid M \in \instMod{\SM}(T) \}$, i.e., models
are all $\theta$-reducts of $T$-models.  Altogether, we arrive at

\begin{definition}
An ``abstract'' (behavioural or protocol) state machine $T_1$
\emph{refines into} a ``concrete'' state machine $T_2$ via signature
morphisms $\theta: Sig(T_1) \to \Sigma$ and $\sigma :
\mathit{sig}(T_2) \to \Sigma$ into some ``mediating signature''
$\Sigma$, if
\begin{equation*}
  T_1 \leadsto \theta^{-1}(\sigma(T_2))
\end{equation*}
in other words, for all $\Sigma$-models $M$
\begin{equation*}
  M\reductop_\sigma \in \instMod{\SM}(T_2)
\Rightarrow
  M\reductop_\theta \in \instMod{\SM}(T_1)
\end{equation*}
\end{definition}

Concerning our original refinement question stated in
Ex.~\ref{ex:ref}, we now can argue: As the state machine of the
\uml{atm}, shown in Fig.~\ref{fig:state-machine} is a refinement of
the protocol state machine in Fig.~\ref{fig:psm}, using a suitable
signature morphism, the interleaving product of the \uml{atm} and
\uml{bank} state machine, in the syntactic version, will be so as
well. As furthermore the protocol state machine has no post
conditions, we have established a refinement.

% In fact, other authors, see e.g. \cite{lano-clark:jot:2007}, have
% given definitions of refinement which are of syntactic nature w.r.t.\
% the transitions. Our approach captures such notions via the standard
% definition of a theory morphism in a given institution.

\section{Conclusions}
\label{sec:conclusions}

We have presented institutions for behavioural and protocol UML state
machines and have studied an interleaving product and a notion of
determinism. We furthermore presented first steps of how to study
refinement in such a context. Our institutions provide the necessary
prerequisites for including UML state machines into a heterogeneous
institution-based UML semantics and to develop their relationship to
other UML sub-languages and diagram types.

An important future extension for the state machine institutions is to
add hierarchical states, and to consider refinements from hierarchical
to flat state machines.  For an integration into the software
development process, the study of correct code generation is
indispensable.  The Heterogeneous Tool Set
(\protect\Hets~\cite{MossakowskiEtAl06,MossakowskiEA06}) provides
analysis and proof support for multi-logic specifications, based on a
strong semantic (institution-based) backbone. Implementation of proof
support for UML state machines (and other kinds of UML diagrams) is
under way.
 
\bibliographystyle{splncs} 
\bibliography{knapp,uml,acsl,inst,hets,uml-inst}

\newpage
\appendix
\section{Proofs of the theorems}

\begin{proposition}
The action institution over a given institution of guards enjoys the
satisfaction condition.
\end{proposition}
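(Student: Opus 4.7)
The plan is to verify the satisfaction condition
\begin{equation*}
  \Omega' \reductop \eta \instmodels[\Eta]{\ENV} \varphi
\iff
  \Omega' \instmodels[\Eta']{\ENV} \eta(\varphi)
\end{equation*}
for an arbitrary morphism $\eta : \Eta \to \Eta'$, sentence $\varphi = g_{\mathrm{pre}} \to [a]\overline{m} \rhd g_{\mathrm{post}}$ over $\Eta$, and structure $\Omega'$ over $\Eta'$, by unfolding both sides until they reduce to the same statement about transitions of $\Omega'$. The whole argument is bookkeeping on top of (a) the guard institution's satisfaction condition and (b) the definition of the action reduct.

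First I would unfold the left-hand side. By definition, $\Omega'\reductop\eta \instmodels{\ENV} \varphi$ states that for every $\omega : V_\Eta \to \Val$ with $\omega \models g_{\mathrm{pre}}$ and every transition $\xtrans{\omega}{a,\overline{m}'}{\Omega'\reductop\eta}{\omega'}$, we have $\omega' \models g_{\mathrm{post}}$ and $\overline{m} \subseteq \overline{m}'$. By the definition of $\Omega'\reductop\eta$, such a transition exists precisely when there are $\omega_1,\omega_2 : V_{\Eta'} \to \Val$ and $\overline{m}'' \subseteq M_{\Eta'}$ with $\omega = \omega_1 \circ \eta_V$, $\omega' = \omega_2 \circ \eta_V$, $\overline{m}' = \eta_M^{-1}(\overline{m}'')$, and $\xtrans{\omega_1}{\eta_A(a),\overline{m}''}{\Omega'}{\omega_2}$. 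Thus the left-hand side becomes: for every such transition of $\Omega'$, if $\omega_1 \circ \eta_V \models g_{\mathrm{pre}}$ then $\omega_2 \circ \eta_V \models g_{\mathrm{post}}$ and $\overline{m} \subseteq \eta_M^{-1}(\overline{m}'')$.

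Next I would unfold the right-hand side. Since $\eta(\varphi) = G(\eta_V)(g_{\mathrm{pre}}) \to [\eta_A(a)]\eta_M(\overline{m}) \rhd G(\eta_V)(g_{\mathrm{post}})$, satisfaction in $\Omega'$ says that for every $\omega_1$ with $\omega_1 \models G(\eta_V)(g_{\mathrm{pre}})$ and every $\xtrans{\omega_1}{\eta_A(a),\overline{m}''}{\Omega'}{\omega_2}$, we have $\omega_2 \models G(\eta_V)(g_{\mathrm{post}})$ and $\eta_M(\overline{m}) \subseteq \overline{m}''$. Applying the guard satisfaction condition twice rewrites $\omega_i \models G(\eta_V)(g)$ as $\omega_i \circ \eta_V \models g$, making the right-hand side read the same as the left, up to the message clause.

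The two statements then match after invoking the elementary set-theoretic equivalence $\overline{m} \subseteq \eta_M^{-1}(\overline{m}'') \iff \eta_M(\overline{m}) \subseteq \overline{m}''$, valid for any function $\eta_M$ and any $\overline{m} \subseteq M_\Eta$, $\overline{m}'' \subseteq M_{\Eta'}$. I do not anticipate a real obstacle; the only point that deserves a moment's care is aligning the quantifier structure, that is, recognising that ``for all $\omega$ and all transitions from $\omega$ in $\Omega'\reductop\eta$'' on the left is, by the reduct definition, equivalent to ``for all transitions $\xtrans{\omega_1}{\eta_A(a),\overline{m}''}{\Omega'}{\omega_2}$'' on the right, so that no surjectivity assumption on $\eta_V$ or $\eta_M$ is required.
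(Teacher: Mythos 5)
Your proposal is correct and follows essentially the same route as the paper's proof: unfold both sides, use the definition of the action reduct to rewrite transitions of $\Omega'\reductop\eta$ as transitions of $\Omega'$ with messages $\eta_M^{-1}(\overline{m}'')$, apply the guard institution's satisfaction condition to the pre- and post-conditions, and close with the equivalence $\overline{m}\subseteq\eta_M^{-1}(\overline{m}'')\iff\eta_M(\overline{m})\subseteq\overline{m}''$. Your explicit remark on the quantifier alignment (and that no surjectivity of $\eta_V$ or $\eta_M$ is needed) is a point the paper leaves implicit, but it is the same argument.
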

\begin{proof}
\begin{equation*}
\begin{array}{ll}
&\Omega'\reductop_\eta\models g[a]\overline{m}\rhd g'\\
\mbox{iff}&\forall\omega_1,\omega_2:V_{\Eta_1}\to\Val\,\forall\overline{m}'\,. (\omega_1\models g\wedge\shorttrans{\omega_1}{a, \overline{m}'}{\Omega'\reductop_\eta}{\omega_2})\Rightarrow \omega_2\models g'\wedge\overline{m}\subseteq\overline{m}'\\
\mbox{iff}&\forall\omega_1,\omega_2:V_{\Eta_1}\to\Val\,\forall\overline{m}'\,. (\omega_1\models g\wedge\shorttrans{\omega_1}{a, \eta_M^{-1}(\overline{m}')}{\Omega'\reductop_\eta}{\omega_2})\Rightarrow \omega_2\models g'\wedge\overline{m}\subseteq\eta_M^{-1}(\overline{m}')\\
\mbox{iff}&\forall\omega'_1,\omega'_2:V_{\Eta_2}\to\Val\,\forall\overline{m}'\,. (\omega'_1\reductop_\eta\models g\wedge\shorttrans{\omega'_1}{\eta_A(a), \overline{m}'}{\Omega'}{\omega'_2})\Rightarrow \omega'_2\reductop_\eta\models g'\wedge\eta_M(\overline{m})\subseteq\overline{m}'\\
\mbox{iff}&\forall\omega'_1,\omega'_2:V_{\Eta_2}\to\Val\,\forall\overline{m}'\,. (\omega'_1\models G(\eta_V)(g)\wedge\shorttrans{\omega'_1}{\eta_A(a), \overline{m}'}{\Omega'}{\omega'_2})\Rightarrow \omega'_2\models G(\eta_V)(g')\wedge\eta_M(\overline{m})\subseteq\overline{m}'\\
\mbox{iff}&
\Omega'\models G(\eta_V)(g)[\eta_A(a)]\eta_M(\overline{m}) \rhd
G(\eta_V)(g')\\
\mbox{iff}&
\Omega'\models \eta(g[a]\overline{m}\rhd g')
\end{array}
\end{equation*}
The third last step uses the satisfaction condition of the institution
of guards.
\end{proof}

\begin{proposition}
The state machine institution over given institutions of actions and
guards enjoys the satisfaction condition.
\end{proposition}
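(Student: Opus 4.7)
The plan is to follow the same template as the proof for the action institution: unfold the satisfaction relation and the definitions of reduct and translation, then verify that the two sides match. Concretely, for a state machine signature morphism $\sigma : \Sigma \to \Sigma'$ (over fixed $\Eta$ and $\Omega$), a structure $\Theta' = (I_{\Theta'}, \Delta_{\Theta'})$ over $\Sigma'$, and a sentence $(s_0, T)$ over $\Sigma$, I would split the biconditional into two halves: the initial-state condition $\pi_2(I_{\cdot}) = \cdot$ and the characterisation of the transition relation as a least fixpoint of the two generating clauses.

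For the initial state part, I would simply unfold the reduct: by definition $I_{\Theta' \reductop \sigma} = \{(\omega, s) \mid (\omega, \sigma_S(s)) \in I_{\Theta'}\}$, so its second projection is $s_0$ iff the second projection of $I_{\Theta'}$ is $\sigma_S(s_0)$, which is exactly the initial-state condition of the translated sentence $\sigma(s_0, T)$. This step is straightforward bookkeeping.

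The substantive part is the transition-relation clause. I would show a correspondence between derivations via $T$ over $\Sigma$ and derivations via $\sigma(T)$ over $\Sigma'$, set up so that the reduct $\Delta_{\Theta'} \reductop \sigma$ is the least relation generated by $T$ iff $\Delta_{\Theta'}$ is the least relation generated by $\sigma(T)$. The key observations are: (i) for every transition $\shorttrans{s}{p[g]/a, \overline{f}}{T}{s'}$ there is a matching $\shorttrans{\sigma_S(s)}{\sigma_P(p)[g]/a, \powerset\sigma_F(\overline{f})}{\sigma(T)}{\sigma_S(s')}$, and $\sigma_S$, $\sigma_E$, $\sigma_F$ are injective, so no derivations are conflated; (ii) the action component $\omega$ and the action relation $\Omega$ are unchanged by $\sigma$, so the premise $\shorttrans{\omega}{a,\overline{m}}{\Omega}{\omega'}$ together with $\omega \models g$ transports verbatim; (iii) the event-pool manipulations $p :: \overline{p}$ and $\overline{p} \lhd (\ldots)$ commute with $\sigma_P$ and $\sigma_P^{-1}$ because $\sigma_P$ is injective and, by the side condition $E_\Sigma \cap M_\Eta = E_{\Sigma'} \cap M_\Eta$, the filter by $E_\Sigma$ on the $\Sigma$-side matches the filter by $E_{\Sigma'}$ on the $\Sigma'$-side under $\sigma_P^{-1}$. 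Using these, each generating step for the first (``firing'') clause on one side corresponds exactly to one on the other, and the same for the second (``discarding'') clause, where the universal quantification over outgoing transitions from $s$ in $T$ matches that from $\sigma_S(s)$ in $\sigma(T)$ by injectivity of $\sigma_S$.

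The main obstacle I expect is the interaction between $\sigma_P^{-1}$ (applied in the reduct of $\Delta_{\Theta'}$) and the event-pool operations $p :: \overline{p}$ and $\overline{p} \lhd \overline{p}'$: one has to verify that extracting an event and then reducing, vs.\ reducing and then extracting, yield the same result, and likewise for adding fresh completion events via $\powerset\sigma_F(\overline{f})$. This comes down to carefully using injectivity of $\sigma_E$ and $\sigma_F$ and disjointness of $E_\Sigma$ and $F_\Sigma$; once these book-keeping lemmas are established, the satisfaction condition follows by induction on the fixpoint construction of $\Delta_{\Theta'}$ respectively $\Delta_{\Theta' \reductop \sigma}$.
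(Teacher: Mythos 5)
Your proposal is correct and follows essentially the same route as the paper's proof: unfold the reduct, the translated sentence and the two generating clauses of $\Delta_\Theta$, and show they correspond under $\sigma$, with the preservation-of-internal-messages condition $E_\Sigma \cap M_\Eta = E_{\Sigma'} \cap M_\Eta$ doing exactly the work you assign it for the $\overline{m}\setminus E_\Sigma$ and $\overline{m}\cap E_\Sigma$ filters. You spell out the initial-state clause, the discarding clause and the least-fixpoint induction, which the paper leaves as ``similar''/implicit, but the substance is the same.
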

\begin{proof}
We only consider the first case of the condition for $\Delta_\Theta$ in
the definition of the satisfaction relation; the second case is similar.
Using preservation
of internal messages, one can see that
\begin{gather*}
  \shorttrans{(\omega, p :: \sigma_P^{-1}(\overline{p}), s)}{\overline{m} \setminus E_{\Sigma}}{\Delta_{\Theta}}{(\omega', \sigma_P^{-1}(\overline{p}) \lhd ((\overline{m} \cap E_{\Sigma}) \cup \overline{f}), s')}
\quad\text{if}\\\qquad
  \exists \shorttrans{s}{\sigma_P^{-1}(p)[g]/a, \overline{f}}{T}{s'} \,.\, \omega \models g \land \shorttrans{\omega}{a,\overline{m}}{\Omega}{\omega'}% \land \overline{m} \in M_{\Eta}
\end{gather*}
is equivalent to
\begin{gather*}
  \shorttrans{(\omega, \sigma_P(p) :: \overline{p}, \sigma_P(s))}{\overline{m} \setminus E_{\Sigma'}}{\Delta_{\Theta}}{(\omega', \overline{p} \lhd ((\overline{m} \cap E_{\Sigma'}) \cup \sigma_F(\overline{f})), \sigma_P(s'))}
\quad\text{if}\\\qquad
  \exists \shorttrans{\sigma_S(s)}{\sigma_P(p)[g]/a, \sigma_F(\overline{f})}{\sigma(T)}{\sigma_P(s')} \,.\, \omega \models g \land \shorttrans{\omega}{a,\overline{m}}{\Omega}{\omega'}% \land \overline{m} \in M_{\Eta}
\end{gather*}

\end{proof}

\ \\
\textbf{Theorem \ref{thm:compatible-interleaving}.}
 \textit{If  $\langle\Omega_i, \Theta_i\rangle \models (s_0^i, T_i)$ ($i=1,2$),
then $\langle\Omega_1, \Theta_1\rangle \pll \langle\Omega_2,
  \Theta_2\rangle\models (s_0^1, T_1) \pll (s_0^2,T_2)$.}
\begin{proof}
For simplicity, we concentrate on the first condition of $\Delta_\Theta$
in the satisfaction condition.
By construction of $\Delta_{\Theta_1}\pll\Delta_{\Theta_2}$,
\begin{gather*}
  \longtrans{(\omega, p :: (\overline{p}_1 \cup \overline{p}_2), (s_1, s_2))}
            {\overline{m} \setminus E_{\hat{\Sigma}}}
            {\Delta_{\Theta_1} \pll \Delta_{\Theta_2}}
            {(\omega', (\overline{p}_1 \cup \overline{p}_2) \lhd ((\overline{p}' \cup \overline{m}) \cap E_{\hat{\Sigma}}), (s_1', s_2'))}
\\\qquad
\begin{array}{@{}ll@{}}
\text{iff } & \exists i \in \{ 1, 2 \} \,.\, \shorttrans{(\omega\reductop_{V_{\Eta_i}}, p :: \overline{p}_i, s_i)}{  \overline{m}}{\Delta_{\Theta_i}}{(\omega'\reductop_{V_{\Eta_i}}, \overline{p}_i \lhd \overline{p}', s_i')} \land{}\\
            &\qquad \forall j \in \{ 1, 2 \} \setminus \{ i \} \,.\, (\omega\reductop_{V_{\Eta_j}}, \overline{p}_j, s_j)=(\omega'\reductop_{V_{\Eta_j}}, \overline{p}_j, s_j')% \in (V_{\Eta_j}\to \Val) \times E_{\Sigma_j} \times S_{\Eta_j}
\ \text{.}
\end{array}
\end{gather*}
By the assumption $\langle\Omega_i, \Theta_i\rangle \models (s_0^i, T_i)$ ($i=1,2$), the latter condition is equivalent to
\begin{gather*}
\begin{array}{@{}ll@{}}
 & \exists \overline{m}'\,.  \overline{m}'\setminus E_\Sigma=\overline{m}\land (\overline{m}'\cap E_\Sigma)\cup\overline{f}=\overline{p}'\land \exists i \in \{ 1, 2 \} \,.\,  \\
&\qquad\exists \shorttrans{s_i}{p[g]/a, \overline{f}}{T_i}{s'_i} \,.\, \omega \models g \land \shorttrans{\omega\reductop_{V_{\Eta_i}}}{a,\overline{m}'}{\Omega}{\omega'\reductop_{V_{\Eta_i}}} \land{}\\
            &\qquad \forall j \in \{ 1, 2 \} \setminus \{ i \} \,.\, \omega\reductop_{V_{\Eta_j}}=\omega'\reductop_{V_{\Eta_j}}\land s_j=s_j'% \in (V_{\Eta_j}\to \Val) \times E_{\Sigma_j} \times S_{\Eta_j}
\ \text{.}
\end{array}
\end{gather*}
But this in turn is equivalent to 
\begin{gather*}
\begin{array}{l}
\exists \overline{m}'\,.  \overline{m}'\setminus E_\Sigma=\overline{m}\land (\overline{m}'\cap E_\Sigma)\cup\overline{f}=\overline{p}'\land \\
\exists i \in \{ 1, 2 \} \,.\, \shorttrans{s_i}{p[g]/a, \overline{f}}{T_i}{s'_i}
\land\omega \models g \land
  \forall j \in \{ 1, 2 \} \setminus \{ i \} \,.\, s_j = s'_j
\land 
\shorttrans{\omega}{a,\overline{m}'}{\Omega}{\omega'}.
\end{array}
\end{gather*}
i.e.
\begin{gather*}
\begin{array}{l}
\exists \overline{m}'\,.  \overline{m}'\setminus E_\Sigma=\overline{m}\land (\overline{m}'\cap E_\Sigma)\cup\overline{f}=\overline{p}'\land \\
\shorttrans{(s_1, s_2)}{p[g]/a, \overline{f}}{T_1\pll T_2}{(s'_1, s'_2)}
\land \omega \models g \land
\shorttrans{\omega}{a,\overline{m}'}{\Omega}{\omega'}.
\end{array}
\end{gather*}

\newcommand{\eat}[1]{}
%\eat{
Altogether, we get
\begin{gather*}
  \longtrans{(\omega, p :: (\overline{p}_1 \cup \overline{p}_2), (s_1, s_2))}
            {\overline{m} \setminus E_{\hat{\Sigma}}}
            {\Delta_{\Theta_1} \pll \Delta_{\Theta_2}}
            {(\omega', (\overline{p}_1 \cup \overline{p}_2) \lhd ((\overline{p}' \cup \overline{m}) \cap E_{\hat{\Sigma}}), (s_1', s_2'))}
\\\qquad
\begin{array}{@{}ll@{}}
\text{iff } & \exists \overline{m}'\,.  \overline{m}'\setminus E_\Sigma=\overline{m}\land (\overline{m}'\cap E_\Sigma)\cup\overline{f}=\overline{p}'\land \\
&\shorttrans{(s_1, s_2)}{p[g]/a, \overline{f}}{T_1\pll T_2}{(s'_1, s'_2)}
\land \omega \models g \land
\shorttrans{\omega}{a,\overline{m}'}{\Omega}{\omega'}.
\end{array}
\end{gather*}
which by taking $\overline{p}=\overline{p_1}\cup\overline{p_1}$ amounts to
\begin{gather*}
  \longtrans{(\omega, p :: \overline{p}, (s_1, s_2))}
            {\overline{m}'\setminus E_\Sigma}
            {\Delta_{\Theta_1} \pll \Delta_{\Theta_2}}
            {(\omega', \overline{p} \lhd (\overline{m}'\cap E_\Sigma)\cup\overline{f}, (s_1', s_2'))}
\\\qquad
\begin{array}{@{}ll@{}}
\text{iff } & \exists 
\shorttrans{(s_1, s_2)}{p[g]/a, \overline{f}}{T_1\pll T_2}{(s'_1, s'_2)}
\land \omega \models g \land
\shorttrans{\omega}{a,\overline{m}'}{\Omega}{\omega'}.
\end{array}
\end{gather*}
which means
$$\langle\Omega_1, \Theta_1\rangle \pll \langle\Omega_2,
  \Theta_2\rangle\models (s_0^1, T_1) \pll (s_0^2,T_2)$$
%}
\end{proof}

\ \\
\textbf{Theorem \ref{th:monoid}.} \textit{The set of state machine structures (over all signatures) with
interleaving product $\pll$ forms a discrete symmetric monoidal
category, which is a ``commutative monoid up to isomorphism''.
} \\
\textit{To show: $(\Omega_1 \pll \Omega_2) \pll \Omega_3 \cong \Omega_1 \pll (\Omega_2 \pll \Omega_3)$}\\
The associativity follows from \\
$\shorttrans{\omega}{a,\overline{m}}{(\Omega_1 \pll \Omega_2) \pll \Omega_3}{\omega'}$ 
\[\begin{array}{rll}
\text{iff} & & a \in A_{\Eta_1} \cup A_{\Eta_2}, \shorttrans{\omega\reductop_{V_{\Eta_1} \cup V_{\Eta_2}}}{a,\overline{m}}{\Omega_1 \pll \Omega_2}{\omega'\reductop_{V_{\Eta_1} \cup V_{\Eta_2}}}, \omega\reductop_{V_{\Eta_3} \setminus (V_{\Eta_1} \cup V_{\Eta_2})} = \omega'\reductop_{V_{\Eta_3} \setminus (V_{\Eta_1} \cup V_{\Eta_2})} \\
& \vee &   a \in A_{\Eta_3}, \shorttrans{\omega\reductop_{V_{\Eta_3}}}{a,\overline{m}}{\Omega_3}{\omega'\reductop_{V_{\Eta_3}}}, \omega\reductop_{(V_{\Eta_1} \cup V_{\Eta_2}) \setminus V_{\Eta_3}} = \omega'\reductop_{(V_{\Eta_1} \cup V_{\Eta_2}) \setminus V_{\Eta_3} } \\

\text{iff} & & a \in A_{\Eta_1}, \shorttrans{\omega\reductop_{V_{\Eta_1}}}{a,\overline{m}}{\Omega_1}{\omega'\reductop_{V_{\Eta_1}}}, \omega\reductop_{V_{\Eta_3} \setminus V_{\Eta_1}} = \omega'\reductop_{V_{\Eta_3} \setminus V_{\Eta_1}}, \omega\reductop_{V_{\Eta_2} \setminus V_{\Eta_1}} = \omega'\reductop_{V_{\Eta_2} \setminus V_{\Eta_1}} \\
& \vee & a \in A_{\Eta_2}, \shorttrans{\omega\reductop_{V_{\Eta_2}}}{a,\overline{m}}{\Omega_2}{\omega'\reductop_{V_{\Eta_2}}}, \omega\reductop_{V_{\Eta_3} \setminus V_{\Eta_2}} = \omega'\reductop_{V_{\Eta_1} \setminus V_{\Eta_2}}, \omega\reductop_{V_{\Eta_1} \setminus V_{\Eta_2}} = \omega'\reductop_{V_{\Eta_3} \setminus V_{\Eta_2}} \\
& \vee &   a \in A_{\Eta_3}, \shorttrans{\omega\reductop_{V_{\Eta_3}}}{a,\overline{m}}{\Omega_3}{\omega'\reductop_{V_{\Eta_3}}}, \omega\reductop_{V_{\Eta_1} \setminus V_{\Eta_3}} = \omega'\reductop_{V_{\Eta_1} \setminus V_{\Eta_3}}, \omega\reductop_{V_{\Eta_2} \setminus V_{\Eta_3}} = \omega'\reductop_{V_{\Eta_2} \setminus V_{\Eta_3}} \\

\text{iff} & & a \in A_{\Eta_2} \cup A_{\Eta_3}, \shorttrans{\omega\reductop_{V_{\Eta_2} \cup V_{\Eta_3}}}{a,\overline{m}}{\Omega_2 \pll \Omega_3}{\omega'\reductop_{V_{\Eta_2} \cup V_{\Eta_3}}}, \omega\reductop_{V_{\Eta_1} \setminus (V_{\Eta_2} \cup V_{\Eta_3})} = \omega'\reductop_{V_{\Eta_1} \setminus (V_{\Eta_2} \cup V_{\Eta_3})} \\
& \vee &   a \in A_{\Eta_1}, \shorttrans{\omega\reductop_{V_{\Eta_1}}}{a,\overline{m}}{\Omega_1}{\omega'\reductop_{V_{\Eta_1}}}, \omega\reductop_{(V_{\Eta_2} \cup V_{\Eta_3}) \setminus V_{\Eta_1}} = \omega'\reductop_{(V_{\Eta_2} \cup V_{\Eta_3}) \setminus V_{\Eta_1} } \\

\text{iff} & & \shorttrans{\omega}{a,\overline{m}}{\Omega_1 \pll (\Omega_2 \pll \Omega_3)}{\omega'}
\end{array}\]
\textit{To show: $\Omega_1 \pll \Omega_2 \cong \Omega_2 \pll \Omega_1$}\\
The commutativity follows directly from the definition of the product of action relations. \\ \\

\begin{proof}
\textit{To show:} $(\Delta_{\Omega_1} \pll \Delta_{\Omega_2}) \pll  \Delta_{\Omega_3} \cong  \Delta_{\Omega_1} \pll ( \Delta_{\Omega_2} \pll  \Delta_{\Omega_3})$\\
Note that the event sets are disjoint, i.e. each event triggers a step in at most one state machine. Thus the existence condition in the definition of $\Delta_{\Theta_1} || \Delta_{\Theta_2}$ can be written as a disjunction of all possible cases:
\[
\begin{array}{rll}
\multicolumn{2}{l}{\shorttrans{(\omega, e :: (\overline{e_1} \cup \overline{e_2}), (s_1,s_2))}{\overline{m}  \backslash E_{\hat{\Sigma}}}{\Delta_{\Theta_1} || \Delta_{\Theta_2}}{(\omega', (\overline{e_1} \cup \overline{e_2})  \lhd ((\overline{e}' \cup \overline{m}) \cap E_{\hat{\Sigma}}), (s_1',s_2'))}} \\ 
\text{iff} 
&  \\
& ( \shorttrans{(\omega\reductop_{V_{\Eta_1}}, e :: \overline{e_1},s_1)}{\overline{m}}{\Delta_{\Theta_1}}{(\omega'\reductop_{V_{\Eta_1}} \wedge \overline{e_1} \lhd \overline{e}',s_1')}, \\ 
& \qquad \emptyTrans{2})  \\
& \vee  (\shorttrans{(\omega\reductop_{V_{\Eta_2}}, e :: \overline{e_2},s_2)}{\overline{m}}{\Delta_{\Theta_2}}{(\omega'\reductop_{V_{\Eta_2}}, \overline{e_2} \lhd \overline{e}',s_2')}\wedge \\ 
& \qquad  \emptyTrans{1}) \\
\end{array}
\]
If a set of messages $\overline{m} \backslash E_{\hat{\Sigma}}$ is emitted by the machine represented by the left interleaving product $\Delta_{\Theta_1} \pll  \Delta_{\Theta_2}$ it is possible that it contains elements from $E_{\Sigma_3}$ or $M_{\Eta_3}$. However it is not possible that any events of $E_{\hat{\Sigma}}$ are contained. In order to construct the messages emitted by the machine represented by $(\Delta_{\Theta_1} \pll  \Delta_{\Theta_2}) \pll  \Delta_{\Theta_3}$ the messages are stripped off any events contained in $E_{\Sigma_3}$:
\[\begin{array}{cl}

(\overline{m} \backslash (E_{\Sigma_1} \cup E_{\Sigma_2})) \backslash ((E_{\Sigma_1} \cup E_{\Sigma_2})  \cup E_{\Sigma_3})
\end{array}\]
The previous filtering of messages by the left interleaving product becomes obsolete by the later applied, more granulated one, done by the whole product. So it is sufficient to regard the latter filtering solely:
\[\begin{array}{cl}

(\overline{m} \backslash (E_{\Sigma_1} \cup E_{\Sigma_2})) \backslash ((E_{\Sigma_1} \cup E_{\Sigma_2})  \cup E_{\Sigma_3})
=  \overline{m}  \backslash ((E_{\Sigma_1} \cup E_{\Sigma_2})  \cup E_{\Sigma_3})
\end{array}\]

Same applies to the messages emitted by the outer machine. Using these results the associativity results from
\[
\begin{array}{lll}
\multicolumn{2}{l}{\llongtrans{ (\omega, e :: (\overline{e} \cup \overline{e_3}), (s,s_3)) }{
\overline{m} \backslash ((E_{\Sigma_1} \cup E_{\Sigma_2})  \cup E_{\Sigma_3})}{(\Delta_{\Theta_1} \pll  \Delta_{\Theta_2}) \pll  \Delta_{\Theta_3}}{}}\\
\multicolumn{2}{l}{\qquad(\omega', (\overline{e} \cup \overline{e_3})  \lhd ((\overline{e}' \cup \overline{m}) \cap ((E_{\Sigma_1} \cup E_{\Sigma_2})  \cup E_{\Sigma_3})), (s',s_3'))} \\ 
\text{iff } & \\
& \begin{array}{ll}
& (\shorttrans{(\omega\reductop_{V_{\Eta_1} \cup V_{\Eta_2}}, e :: \overline{e},s)}{\overline{m}}{\Delta_{\Theta_1} \pll  \Delta_{\Theta_2}}{(\omega', \overline{e_i} \lhd \overline{e}',s')} \\ 
& \wedge \emptyTrans{3}) \\ 
 \vee & \\ 
 & ((\omega\reductop_{V_{\Eta_1} \cup V_{\Eta_2}}, \overline{e} ,s)=(\omega'\reductop_{V_{\Eta_1} \cup V_{\Eta_2}}, \overline{e},s') \\ 
 & \qquad \in (V_{\Eta_1} \cup V_{\Eta_2} \rightarrow \Val) \times (E_{\Eta_1} \cup E_{\Eta_2}) \times (S_{\Eta_1} \cup S_{\Eta_2}) \\
&  \wedge \shorttrans{(\omega\reductop_{V_{\Eta_3}}, e :: \overline{e_3} ,s_3)}{\overline{m}}{\Delta_{\Theta_3}}{(\omega'\reductop_{V_{\Eta_3}}, \overline{e_3}  \lhd \overline{e}',s_3')}) \\

\end{array} 
\end{array}\]
This can be rewritten as
\begin{equation}
\label{eq:Transaction}
\begin{array}{l}
\llongtrans{((\omega, e :: ( (\overline{e_1} \cup \overline{e_2}) \cup \overline{e_3}), ((s_1,s_2),s_3))}{\overline{m}   \backslash ((E_{\Sigma_1} \cup E_{\Sigma_2})  \cup E_{\Sigma_3})}{(\Delta_{\Theta_1} \pll  \Delta_{\Theta_2}) \pll  \Delta_{\Theta_3}}{} \\
 { ((\omega', ((\overline{e_1} \cup \overline{e_2}) \cup \overline{e_3})  \lhd ((\overline{e}' \cup \overline{m}) \cap ((E_{\Sigma_1} \cup E_{\Sigma_2})  \cup E_{\Sigma_3})), ((s_1',s_2'),s_3'))}
 \end{array}
\end{equation}

which results in a slightly different condition
\[
\begin{array}{lll}
\text{iff} &  & (\shorttrans{(\omega\reductop_{V_{\Eta_1} \cup V_{\Eta_2}}, e :: (\overline{e_1} \cup \overline{e_2}), (s_1,s_2))}{\overline{m}}{\Delta_{\Theta_1} || \Delta_{\Theta_2}}{} \\ 
& & \qquad (\omega'\reductop_{V_{\Eta_1} \cup V_{\Eta_2}}, (\overline{e_1} \cup \overline{e_2})  \lhd ((\overline{e}' \cup \overline{m}) \cap ((E_{\Sigma_1} \cup E_{\Sigma_2})  \cup E_{\Sigma_3})), (s_1',s_2')) \\ 
&  &\wedge \emptyTrans{3})  \\ 
& \vee &((\omega\reductop_{V_{\Eta_1} \cup V_{\Eta_2}},  (\overline{e_1} \cup \overline{e_2}), (s_1,s_2)) = (\omega'\reductop_{V_{\Eta_1} \cup V_{\Eta_2}}, \overline{e_1} \cup \overline{e_2}, (s_1',s_2')) \\
& & \qquad \in ((V_{\Eta_1} \cup V_{\Eta_2}) \rightarrow   \Val) \times \powerset(E_{\Sigma_1} \cup E_{\Sigma_2}) \times (S_{\Sigma_1} \cup S_{\Sigma_2}) \\ 
& & \wedge \shorttrans{(\omega\reductop_{V_{\Eta_3}}, e ::\overline{e_3} ,s_3)}{\overline{m}}{\Delta_{\Theta_3}}{(\omega'\reductop_{V_{\Eta_3}}, \overline{e_3} \lhd \overline{e}',s_3'))}  \\
\end{array}
\]
\[
\begin{array}{rll}
\text{iff} &  \\
& &(\qquad((\shorttrans{(\omega\reductop_{V_{\Eta_1}}, e :: \overline{e_1},s_1)}{\overline{m}}{\Delta_{\Theta_1}}{(\omega'\reductop_{V_{\Eta_3}}, \overline{e_1} \lhd \overline{e}',s_1')} \\
& & \qquad \wedge \emptyTrans{2}) \\
& & \vee \\
& & \qquad(\emptyTrans{1} \\ 
& & \qquad\wedge  \shorttrans{(\omega\reductop_{V_{\Eta_2}}, e :: \overline{e_2},s_2)}{\overline{m}}{\Delta_{\Theta_2}}{(\omega'\reductop_{V_{\Eta_3}}, \overline{e_2} \lhd \overline{e}' ,s_2')})) \\ 
& & \wedge \emptyTrans{3}) \\ 
& \vee \\
 & & ( \emptyTrans{1} \\ 
& & \wedge \emptyTrans{2} \\
& & \wedge \shorttrans{(\omega\reductop_{V_{\Eta_3}}, e ::\overline{e_3} ,s_3)}{\overline{m}}{\Delta_{\Theta_3}}{(\omega'\reductop_{V_{\Eta_3}}, \overline{e_3} \lhd \overline{e}',s_3'))} 
\end{array}
\]
\[\begin{array}{rll}
\text{iff} & & (\emptyTrans{1}\\
 & & \wedge( \qquad (\shorttrans{(\omega\reductop_{V_{\Eta_2}}, e :: \overline{e_2},s_2)}{\overline{m}}{\Delta_{\Theta_2}}{(\omega'\reductop_{V_{\Eta_2}}, \overline{e_2} \lhd \overline{e}',s_2')} \\ 
& &  \qquad \wedge \emptyTrans{3}) \\
& &\quad \vee \\
& &\qquad  (\emptyTrans{2} \\
& &\qquad  \wedge \shorttrans{(\omega\reductop_{V_{\Eta_3}}, e :: \overline{e_3},s_3)}{\overline{m}}{\Delta_{\Theta_3}}{(\omega'\reductop_{V_{\Eta_3}}, \overline{e_3} \lhd \overline{e}' ,s_3')}))\\  
& & \hphantom{\wedge}) \\ 
&  \vee & \\
& &  (\shorttrans{(\omega\reductop_{V_{\Eta_1}}, e ::\overline{e_1} ,s_1)}{\overline{m}}{\Delta_{\Theta_1}}{(\omega'\reductop_{V_{\Eta_3}}, \overline{e_1} \lhd \overline{e}',s_1')} \\
& &\wedge \emptyTrans{2} \\
& &\wedge \emptyTrans{3})\\
\end{array}
\]
This equivalence reveals that the transitions condition is associative. This result and the isomorphism of transition (\ref{eq:Transaction}) to 
\[
\begin{array}{l}
 \llongtrans{(\omega, e :: (\overline{e_1} \cup (\overline{e_2} \cup \overline{e_3})), (s_1,(s_2,s_3)))}{\overline{m}   \backslash (E_{\Sigma_1} \cup (E_{\Sigma_2}  \cup E_{\Sigma_3})}{\Delta_{\Theta_1} \pll  (\Delta_{\Theta_2} \pll  \Delta_{\Theta_3})}{} \\
 { (\omega', (\overline{e_1} \cup (\overline{e_2} \cup \overline{e_3}))  \lhd ((\overline{e}' \cup \overline{m}) \cap (E_{\Sigma_1} \cup (E_{\Sigma_2}  \cup E_{\Sigma_3}))), (s_1',(s_2',s_3')))}
\end{array}
\]
concludes the proof of associativity.\\ 
\\
\textit{To show: $\Omega_1 \pll \Omega_2 \cong \Omega_2 \pll \Omega_1$} 

\[
\begin{array}{rll}
\multicolumn{3}{l}{\shorttrans{(\omega, e :: (\overline{e_1} \cup \overline{e_2}), (s_1,s_2))}{\overline{m}  \backslash E_{\hat{\Sigma}}}{\Delta_{\Theta_1} || \Delta_{\Theta_2}}{(\omega', (\overline{e_1} \cup \overline{e_2})  \lhd ((\overline{e}' \cup \overline{m}) \cap E_{\hat{\Sigma}}), (s_1',s_2'))}} \\ 
\text{iff} 
& & \\
& &(\shorttrans{(\omega\reductop_{V_{\Eta_1}}, e :: \overline{e_1},s_1)}{\overline{m}}{\Delta_{\Theta_1}}{(\omega'\reductop_{V_{\Eta_1}}, \overline{e_1} \lhd \overline{e}',s_1')} \\ 
& & \wedge \emptyTrans{2})  \\
& \vee &  \\
& &(\shorttrans{(\omega\reductop_{V_{\Eta_2}}, e :: \overline{e_2},s_2)}{\overline{m}}{\Delta_{\Theta_2}}{(\omega'\reductop_{V_{\Eta_2}}, \overline{e_2} \lhd \overline{e}',s_2')} \\ 
& &\wedge  \emptyTrans{1})\\

\text{iff} 
& & \\
& &(\shorttrans{(\omega\reductop_{V_{\Eta_2}}, e :: \overline{e_2},s_2)}{\overline{m}}{\Delta_{\Theta_2}}{(\omega'\reductop_{V_{\Eta_2}}, \overline{e_2} \lhd \overline{e}',s_2')}, \\ 
& & \wedge  \emptyTrans{1})  \\
& \vee  & \\
& & (\shorttrans{(\omega\reductop_{V_{\Eta_1}}, e :: \overline{e_1},s_1)}{\overline{m}}{\Delta_{\Theta_1}}{(\omega'\reductop_{V_{\Eta_1}}, \overline{e_1} \lhd \overline{e}',s_1')} \\ 
& & \wedge \emptyTrans{2})  \\

\end{array}
\]
The last part of this equivalence is also the condition of the respective transition in $\Delta_{\Theta_2} \pll \Delta_{\Theta_1}$ Furthermore are left and the right hand side of  
\[
\shorttrans{(\omega, e :: (\overline{e_1} \cup \overline{e_2}), (s_1,s_2))}{\overline{m}  \backslash E_{\hat{\Sigma}}}{\Delta_{\Theta_1} || \Delta_{\Theta_2}}{(\omega', (\overline{e_1} \cup \overline{e_2})  \lhd ((\overline{e}' \cup \overline{m}) \cap E_{\hat{\Sigma}}), (s_1',s_2'))}
\]
isomorphic to 
\[
(\omega, e :: (\overline{e_2} \cup \overline{e_1}), (s_2,s_1)) \text{ and } (\omega', (\overline{e_2} \cup \overline{e_1})  \lhd ((\overline{e}' \cup \overline{m}) \cap E_{\hat{\Sigma}}), (s_2',s_1'))
\]
which is also a transition in $\Delta_{\Theta_2} \pll \Delta_{\Theta_1}$.
\\
\\

\textit{To show: $\exists (\Omega_\varepsilon,\Theta_\epsilon): \Delta_{\Theta_1} \pll \Delta_{\Theta_\varepsilon} \cong \Delta_{\Theta}$ and $\Omega_1 \pll \Omega_\varepsilon \cong \Omega_1$ } \\
Consider the state machine 
\begin{enumerate}
\item $E_{\Sigma_\varepsilon}= \emptyset$
\item $F_{\Sigma_\varepsilon}= \emptyset$
\item $S_{\Sigma_\varepsilon} = \{s_0\} \not \subseteq S_{\Sigma_1}$
\item $\Eta_\varepsilon = (\emptyset,\emptyset,\emptyset)$
\item $\Theta_\varepsilon = ((f:\emptyset \rightarrow \emptyset, s_0),\emptyset)$
\end{enumerate}
The resulting transition relation delivers:
\[
\begin{array}{rll}
\multicolumn{3}{l}{\shorttrans{(\omega, e :: (\overline{e_1} \cup \overline{e_\varepsilon}), (s_1,s_\varepsilon))}{\overline{m}  \backslash E_{\hat{\Sigma}}}{\Delta_{\Theta_1} || \Delta_{\Theta_\varepsilon}}{(\omega', (\overline{e_1} \cup \overline{e_\varepsilon})  \lhd ((\overline{e}' \cup \overline{m}) \cap E_{\hat{\Sigma}}), (s_1',s_\varepsilon'))}} \\ 
\text{iff} 
&  \\
& &(\shorttrans{(\omega\reductop_{V_{\Eta_1}}, e :: \overline{e_1},s_1)}{\overline{m}}{\Delta_{\Theta_1}}{(\omega'\reductop_{V_{\Eta_1}}, \overline{e_1} \lhd \overline{e}',s_1')} \\ 
& &\wedge \emptyTrans{\varepsilon})  \\
& \vee  \\
& & (\shorttrans{(\omega\reductop_{V_{\Eta_\varepsilon}}, e :: \overline{e_\varepsilon},s_\varepsilon)}{\overline{m}}{\Delta_{\Theta_\varepsilon}}{(\omega'\reductop_{V_{\Eta_\varepsilon}}, \overline{e_\varepsilon} \lhd \overline{e}',s_\varepsilon')} \\ 
& &  \wedge  \emptyTrans{1})
\end{array}
\]
which can be simplified to

\[
\begin{array}{rll}
\multicolumn{2}{l}{\shorttrans{(\omega_1, e :: \overline{e_1}, (s_1,s_0))}{\overline{m}  \backslash E_{\Sigma_1}}{\Delta_{\Theta_1} || \Delta_{\Theta_\varepsilon}}{(\omega'_1, \overline{e_1}  \lhd ((\overline{e}' \cup \overline{m}) \cap E_{\Sigma_1}), (s_1',s_0))}} \\ 
\text{iff} \\
& \shorttrans{(\omega_1, e :: \overline{e_1},s_1)}{\overline{m}}{\Delta_{\Theta_1}}{(\omega'_1, \overline{e_1} \lhd \overline{e}',s_1')}, \\ 

\end{array}
\]

Now consider the resulting action relation \\
$\shorttrans{\omega}{a,\overline{m}}{\Omega_1 \pll \Omega_2}{\omega'}$ if
for some $i\in\{1,2\}$: $a
\in A_{\Eta_i}$ and $\shorttrans{\omega\reductop_{V_{\Eta_i}}}{a,\overline{m}}{\Omega_i}{\omega'\reductop_{V_{\Eta_i}}}$ and for $i
\neq j \in \{ 1, 2 \} : \omega\reductop_{V_{\Eta_j} \setminus V_{\Eta_i}} = \omega'\reductop_{V_{\Eta_j} \setminus V_{\Eta_i}}$ \\
which amounts to \\
$\shorttrans{\omega_1}{a,\overline{m}}{\Omega_1 \pll \Omega_2}{\omega'_1}$ if $a
\in A_{\Eta_1}$ and $\shorttrans{\omega_1}{a,\overline{m}}{\Omega_1}{\omega'_1}$ with $\omega_1,\omega'_1: V_{\Eta_1} \rightarrow \Val$ \\

It remains to show that the isomorphisms satisfy the coherence conditions
for symmetric monoidal categories. However, this follows easily since
the construction is based on set-theoretic union and product.
\end{proof}

\ \\
\textbf{Theorem \ref{th:InhDetIP}.}
\textit{Let $(\Omega_1, \Theta_1)$ and
$(\Omega_2, \Theta_2)$ be deterministic state machines and both action relations are compatible. Then $(\Omega_1 \pll \Omega_2, \Theta_1 \pll \Theta_2)$ is also deterministic.}

\begin{proof} \ \\
\textit{To show:} $\Omega_1 \pll \Omega_2$ is deterministic \\
If a shared action is triggered the action in the first machine must conform the changes of the same action in the second machine, i.e. shared actions trigger the same changes in shared variables.
Thus the action relation can be written as a action partial function 
\[
\Omega(\omega,a) = (\overline{m}_1 \cup \overline{m}_2,\omega^*)  
\]
iff 
$\Omega_1(\omega\reductop_{V_{\Eta_1}},a) = (\overline{m}_1,\omega'_1)$ and 
$\Omega_1(\omega\reductop_{V_{\Eta_2}},a) = (\overline{m}_2,\omega'_2)$
and
\[ \omega^*(v) = \begin{cases}
\omega'_1(v) & \text{if }  a \in A_{\Eta_1},  v \in V_{\Eta_1} \\ 
\omega(v) & \text{if }  a \in A_{\Eta_1}, v \in V_{\Eta_2}\setminus V_{\Eta_1}\\
\omega(v) & \text{if } a \in A_{\Eta_2}\setminus A_{\Eta_1}, v \in V_{\Eta_1}\setminus V_{\Eta_2}\\
\omega'_2(v) & \text{if } a \in A_{\Eta_2}\setminus A_{\Eta_1}, v \in V_{\Eta_2}\setminus,  V_{\Eta_1} \\ 

\end{cases}
\]
\\
\textit{To show:} $\Delta_{\Omega_1} \pll \Delta_{\Omega_2}$ is deterministic \\
Assume there are $\omega_1,\omega_1',\omega_1^*: V_{\Eta_1} \rightarrow \Val$, $ \omega_2,\omega_2',\omega_2^*: V_{\Eta_2} \rightarrow \Val$, $\overline{e} \subseteq (E_{\Sigma_1} \cup E_{\Sigma_2})$, $s_1,s_1',s_1^* \in S_{\Sigma_1}$, $s_2,s_2',s_2^* \in S_{\Sigma_2}$, $\overline{m} \subseteq M_{\Eta_2} \cup M_{\Eta_2}$ such that
\[
\shorttrans{(\omega,e :: \overline{e}, (s_1,s_2))}{\hspace{0cm}\overline{m}}{\hspace{0cm}\Delta_{\Theta_1} \pll \Delta_{\Theta_2}}{(\omega',\overline{e} \lhd (\overline{e}' \cup \overline{m}) \cap (E_{\Sigma_1} \cup E_{\Sigma_2}), (s_1',s_2'))}
\]
and
\[
\shorttrans{(\omega,e :: \overline{e}, (s_1,s_2))}{\hspace{0cm}\overline{m}}{\hspace{0cm}\Delta_{\Theta_1} \pll \Delta_{\Theta_2}}{(\omega^*,\overline{e} \lhd (\overline{e}^* \cup \overline{m}) \cap (E_{\Sigma_1} \cup E_{\Sigma_2}), (s_1^*,s_2^*))}
\]

Note that both transitions are triggered by the same event $e$. The event sets are disjoint. Thus both transitions must be caused by the same machine. Due to the commutativity of the interleaving product we can assume w.l.o.g. that it was the first machine. \\
Thus exist $\omega_1,\omega_1' \in \Omega_1$ such that $\omega\reductop_{V_{\Eta_1}} = \omega_1$, $\omega'\reductop_{V_{\Eta_1}} = \omega_1'$ and:
\[
\begin{array}{ll}
& \shorttrans{(\omega_1,e::\overline{e},s_1)}{\hspace{0cm}\overline{m}}{\hspace{0cm}\Delta_{\Theta_1}}{(\omega_1',\overline{e} \lhd (e' \cup \overline{m}) \cap E_{\Sigma_1},s_1')}  \\
& \shorttrans{(\omega_1,e::\overline{e},s_1)}{\hspace{0cm}\overline{m}}{\hspace{0cm}\Delta_{\Theta_1}}{(\omega_1^*,\overline{e} \lhd (e^* \cup \overline{m}) \cap E_{\Sigma_1},s_1^*)}
\end{array}
\]
The determinism of both machines yields \[(\omega_1',\overline{e} \lhd (e' \cup \overline{m}) \cap E_{\Sigma_1},s_1') = (\omega_1^*,\overline{e} \lhd (e^* \cup \overline{m}) \cap E_{\Sigma_1},s_1^*)\]
Furthermore, there can be no change in any variable, that is not an element of $V_{\Eta_1}$ and thus $\omega^*\reductop_{ V_{\Eta_2} \setminus V_{\Eta_1}} = \omega'\reductop_{ V_{\Eta_2}\setminus V_{\Eta_1}} =: \omega_R$ which yields $\omega^* = \omega^*_1 \cup \omega_R =\omega'_1 \cup \omega_R = \omega'$ and therefore
\[{(\omega_1',\overline{e} \lhd (e' \cup \overline{m}) \cap E_{\Sigma_1},s_1')}= {(\omega_1^*,\overline{e} \lhd (e^* \cup \overline{m}) \cap E_{\Sigma_1},s_1^*)}\]

The determinism of the messages sent in the transition relation follows from the determinism of the action relation.

\end{proof}

\textbf{Theorem \ref{thm:amalg-action}.}
\emph{The action institution admits weak amalgamation for pushout squares
with injective message mappings.}
\begin{proof}
First, we need some auxiliary definitions. 
One of them is the translation of a
model in the action institution along a signature morphism. Given
$\eta:\Sigma\to\Sigma'$ and a $\Sigma$-model $\Omega$,
let $\eta(\Omega)$ be defined by
$$  \xtrans{\omega_1}{\eta_A(a),\overline{m}}{\eta(\Omega)}{\omega_2}
\mbox{ iff }  \xtrans{\omega_1\reductop_{\eta_V}}{a,\eta_M^{-1}(\overline{m})}{\Omega}{\omega_2\reductop_{\eta_V}}. $$

We also need a modified version $\pll'$ of the interleaving product,
where messages of shared actions leading to compatible
states are united, instead of generating two separate transitions.\footnote
{Note that usually, for combination of state machines, one would use $\pll$.
However, $\pll'$ can be useful for expressing the semantics of
orthogonal regions in hierarchical state machines.}
  $\Omega_1 \pll' \Omega_2$ is given by

$$\shorttrans{\omega}{a,\overline{m}}{\Omega_1 \pll' \Omega_2}{\omega'}
\mbox{ if }\left\{
\begin{array}{l}
\exists i\in\{1,2\}\forall j \in \{ 1, 2 \}\setminus\{i\}\ . \\
\qquad a \in A_{\Eta_i}\!\setminus A_{H_j}\!\land
\shorttrans{\omega\reductop_{V_{\Eta_i}}}{a,\overline{m}}{\Omega_i}{\omega'\reductop_{V_{\Eta_i}}}\land
 \omega\reductop_{V_{\Eta_j}
  \setminus V_{\Eta_i}} = \omega'\reductop_{V_{\Eta_j} \setminus
  V_{\Eta_i}}\\
\lor \exists\overline{m_1},\overline{m_2}\ . \overline{m}=\overline{m_1}\cup\overline{m_2} \land\\
\qquad a \in A_{\Eta_1}\cap A_{\Eta_2}\land \forall i \in \{ 1, 2 \}\,.\,\shorttrans{\omega\reductop_{V_{\Eta_i}}}{a,\overline{m_i}}{\Omega_i}{\omega'\reductop_{V_{\Eta_i}}}
\end{array}\right.
$$

Now let a pushout with injective message mappings
\begin{equation*}
\begin{tikzpicture}[inner sep=0pt, outer sep=2pt]
  \matrix (m) [matrix of math nodes, ampersand replacement=\&, row sep=2.5em, column sep=2.5em, text height=1.7ex, text depth=0.25ex]{
    \Sigma   \& \Sigma_1\\
    \Sigma_2 \& \Sigma_R\\
  };
  \path[->,font=\scriptsize]
  (m-1-1) edge node [left] {$\sigma_2$} (m-2-1)
  (m-1-1) edge node [above] {$\sigma_1$} (m-1-2)
  (m-1-2) edge node [right] {$\theta_1$} (m-2-2)
  (m-2-1) edge node [below] {$\theta_2$} (m-2-2)
  ;
\end{tikzpicture}
\end{equation*}
be given, and assume that
$\Omega_1\reductop_{\sigma_1}=\Omega_2\reductop_{\sigma_2}$.

Since signature morphisms consist of mappings between sets,
it is easy to see that (surjection,injection)-factorisations exist.
Let $\theta_i$ be factorised as $\rho_i\circ\tau_i$. Then
the amalgamation is given by
$$\Omega_R=\tau_1(\Omega_1)\pll'\tau_2(\Omega_2),$$ 
The use of $\pll'$
together with injectivity of the message mappings ensures that
transitions in $\Omega_R$ reduce to transitions in the $\Omega_i$.
\end{proof}

\end{document}